\newtheorem{theorem}{Theorem}
\newtheorem{corollary}{Corollary}
\newtheorem{definition}{Definition}
\newcommand{\remove}[1]{}
\begin{document}

\title{Plane Sweep Algorithms for Data Collection in Wireless Sensor Network using Mobile Sink}

\author{
Dinesh Dash
\thanks{Dept. of CSE, National Institute of Technology Patna, India, e-mail: dd@nitp.ac.in }

\thanks{Manuscript received XXX, XX, 2018; revised XXX, XX, 2019.}}

\markboth{IEEE Transactions on Vehicular Technology, ~Vol.~XX, No.~XX, XXX~2019}
{}

\maketitle

\begin{abstract}
Usage of mobile sink(s) for data gathering in wireless sensor networks(WSNs) improves the performance of WSNs in many respects such as power consumption, lifetime, etc. In some applications, the mobile sink $MS$ travels along a predefined path to collect data from the nearby sensors, which are referred as sub-sinks. Due to the slow speed of the $MS$, the data delivery latency is high. However, optimizing the {\em data gathering schedule}, i.e. optimizing the transmission schedule of the sub-sinks to the $MS$ and the movement speed of the $MS$ can reduce data gathering latency. We formulate two novel optimization problems for data gathering in minimum time. The first problem determines an optimal data gathering schedule of the $MS$ by controlling data transmission schedule and the speed of the $MS$, where the data availabilities of the sub-sinks are given. The second problem generalizes the first, where the data availabilities of the sub-sinks are unknown. Plane sweep algorithms are proposed for finding optimal data gathering schedule and data availabilities of the sub-sinks. The performances of the proposed algorithms are evaluated through simulations. The simulation results reveal that the optimal distribution of data among the sub-sinks together with optimal data gathering schedule improves the data gathering time.

% , idle time and the throughput
% Finally, we find a fixed speed of the mobile sink to collect complete data from all sub-sinks in minimum time. 
\end{abstract}

\begin{IEEEkeywords}
Mobile sink,  Data gathering protocol, Wireless Sensor network,  Plane Sweep Algorithm
\end{IEEEkeywords}

\section{Introduction}
\label{sec:intro}

%%%%%% Data gathering problem and solution %%%%%%%%%%%%%%%

In WSNs, data generated at the sensor nodes are either transmitted through multi-hop transmission to a base station \cite{He:2014,Yao:2015}, or a mobile sink ($MS$) moves through the communication regions of the sensors and collects data from sensors directly/indirectly and brings them to a base station \cite{He:2013, Ma:2013}. In multi-hop transmission, sensors located near the base station are overloaded for relaying data from other sensors to the base station and are therefore prone to deplete their energy faster than other far away sensors.

%%%%%% Motivation of mobile sink based data gathering %%%%%%%%%

Recently, mobile sink based data gathering has been gaining popularity significantly in wireless sensor networks (WSNs). In some applications, the MS periodically patrols the sensors, collects their data, returns to the base station and dumps the collected data at the base station. The problem of determining the tour of the $MS$ has been studied rigorously in \cite{He:2013, Ma:2013, Cheng:2016}. Mobile sink based data gathering improves the performance of WSNs in terms of energy consumption and lifetime of the sensors. However, introducing $MS$ as a data carrier in the network increases the data delivery latency due to the slow speed of the $MS$. Reducing the data delivery latency is a critical issue for the $MS$ based data gathering. Ren and Liang et al. \cite{Ren:2013} have shown that the volume of data collection is proportional to the data delivery latency. Time-sensitive applications such as forest fire detection, intrusion detection etc., demand time bound data delivery. Thus, improving data collection with minimum delivery latency is one of the most challenging issues in $MS$ based data gathering. 
%%%%%%%%%%% Motivation of this work %%%%%%%%%%

Several data gathering algorithms are proposed to improve the data gathering time using mobile sink by shortening the tour length of the $MS$  \cite{He:2013, Ma:2013, Cheng:2016}. The data gathering time depends on the speed of the $MS$ and the length of the tour. There are some studies in \cite{ Sayyed:2015, Huang:2016} on the adaptive speed planning of $MS$ along a predefined path. The $MS$ adjust its speed to maximize network utility and minimize energy consumption. Data gathering problems for rechargeable sensor networks are formulated in \cite{Zhang:2017, Zhang:2016, Guo:2014} by jointly optimizing mobile data gathering and energy provisioning. Gao et. al. \cite{Gao:2010, Gao:2011} propose novel data collection scheme, where a $MS$ is moving along a predefined path with a fixed speed. But, the $MS$ gets limited communication time to collect data from its nearby sensor nodes, referred as sub-sinks. Besides, a metaheuristic (genetic) algorithm is proposed to find data forwarding paths to improve the network throughput as well as to conserve energy. Due to the non-deterministic nature of the algorithm, the solution may vary each time you run the algorithm on the same instance. Therefore, the existing data gathering techniques using $MS$ find optimal tour of $MS$ or find data forwarding paths to $MS$ to improve the network performance, but there is a lack of studies on how to maximize data collection and minimize the data gathering time by controlling the data transmission schedule and speed of the $MS$. The data transmission schedule of the sub-sinks to the $MS$ together with the speed schedule of the $MS$ is called as {\em data gathering schedule} of $MS$. To further improve the total data gathering time, we consider the above two factors and find an optimal distribution of the data generated within sensors among the sub-sinks. In addition, our algorithms are based on the geometric characteristics of the problem and are deterministic. Their correctnesses are also shown.

%%%%%%%%%%%% Example and Contribution %%%%%%%%%%%%%%%%%

\begin{figure}%[!h]
\centering 
\includegraphics[width=8cm]{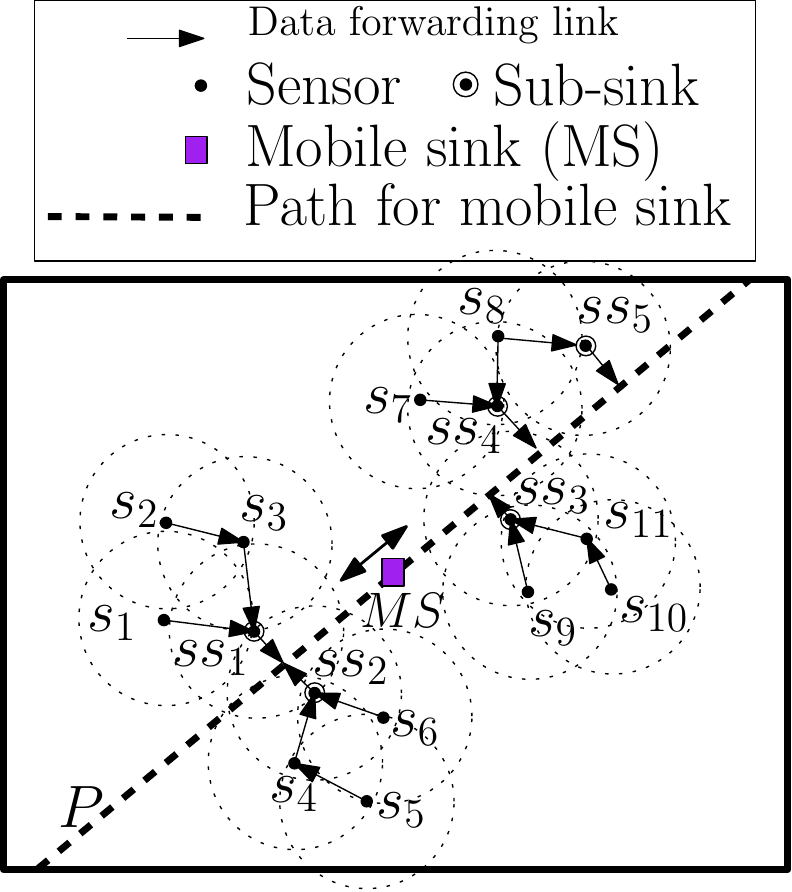}
\caption{Example of Path-constrained mobile sink based sensor network.}
\label {fig:ex1}
\end{figure}

An example of such type of network is illustrated in Figure \ref{fig:ex1}. A mobile sink $MS$ moves along a given path $P$. It collects pre-cached data from a few sensors which are directly reachable from the trajectory path $P$. Those directly reachable sensors are referred as sub-sinks ($ss_1, ss_2, ss_3, ss_4, ss_5$). The $MS$ may collect data from a sub-sink whenever the $MS$ comes under the communication range of the sub-sink. Thus, the sub-sinks send their data to the $MS$ directly. The $MS$ may be within multiple sub-sinks' communication regions, and it receives data from any one of them at a time. Therefore, proper data transmission schedule of the sub-sinks are also required. The remaining sensors, which are not directly reachable to $MS$ ( e.g. $s_1, s_2, \ldots, s_{11}$) send their data to the $MS$ through the sub-sinks using multi-hop communication. The major challenges are to find the optimal data transmission schedule of the sub-sinks to the $MS$ for their data delivery and  speed variation of $MS$ along $P$. Moreover, since a sensor can send data to $MS$ through multiple sub-sinks, finding the optimal data distribution among the sub-sinks is another challenging issue in $MS$ based data gathering. Our major contributions in this article are summarized as follows.

\begin{enumerate}

\item Introduce a time-sensitive data gathering problem using a speed adjustable mobile sink to collect data from sensor networks.

\item  Linear programming formulation of the problem is discussed, where the initial data availabilities of the sub-sinks are given.

\item Plane sweep based data gathering algorithm is proposed to collect data from the sub-sinks by controlling the data transmission schedule of sub-sinks and speed of the $MS$.

\item It is further generalized, where data availabilities of the sub-sinks are optimized by controlling sensors' data distribution among the sub-sinks to improve the data gathering time.

%\item Finally, determination of fixed speed of the mobile sink to collect complete data from the sub-sinks in minimum time. 

%\item Performances of the algorithms are evaluated through simulation.

\end{enumerate}

%%%%%%%%% Organization of the Paper %%%%%%%%%

The rest of the article is organized as follows. Section \ref{sec:relatedworks} discusses some related works on data gathering problems. Section \ref{sec:problemstatement} presents system model and problem statement. Background and related terminologies are defined in section \ref{sec: background}. Section \ref{sec:known_data_availability} describes a plane sweep algorithm for data gathering in minimum time, where data availabilities of the sub-sinks are given. Section \ref{sec:unknown_data_availability} presents a plane sweep algorithm to improve the data gathering time by optimizing the data availability values of the sub-sinks so that the total data gathering time can be reduced. Section \ref{sec:experiment} measures the performance of our proposed solutions. Finally, section \ref{sec:conclusion} concludes the article.

%Furthermore, section \ref{sec:opt_fixed_speed} discusses algorithm for finding optimal fixed speed of the $MS$ so that data can be collected in minimum time.

%%%%%%%%%%%%%%%%%%%%%%%%%%%%%%%%%
\section{Related Works} \label{sec:relatedworks}

Several data gathering algorithms have been proposed in the literature using mobile sink $MS$ where the path of the $MS$ is controllable or fixed. Depending on applications, different objectives are attained such as maximizing network lifetime, minimizing the total energy consumption, reducing total tour length, etc. In this section, we classify the literature based on whether the path of the $MS$ is controllable or fixed. 

%%%%%%% Path-Controllable Mobile Sink %%%%%%%%%%%%%

%In article \cite{Somasundara:2006}, it is concluded that moving data physically using mobile sink may sometimes be more energy efficient than transmitting it over a wireless route. It is analyzed and showed the effect of the speed of mobile sink on the sensors' buffer requirements in an efficient way. In addition, they showed that by using effective scheduling algorithm, there will be no data loss due to the buffer overflow of the sensors.

Somasundara et. al. \cite{Somasundara:2007} claim that the sensors with higher variation in sensed data demand more frequent data collection than others. They proposed a solution based on optimizing travelling path of the $MS$ that allows the $MS$ to visit sensor with a different frequency to reduce buffer overflow. They prove that the decision version of the problem is NP-complete and two heuristic algorithms are proposed. The authors in \cite{Sugihara:2010} analyze various models of motion planning of mobile sink to solve mobile sink scheduling problem in order to minimize the data delivery latency of the network. He et. al. formulate the data gathering problem using $MS$ as a travelling salesman problem (TSP) with neighborhoods \cite{He:2013}. They schedule $MS$ through the deployed region to improve the tour length of the $MS$ and consider multi-rate wireless communication for data transmission. In \cite{Ma:2013}, a periodic data gathering protocol is proposed for a disconnected sensor network. The $MS$ traverses the entire sensor network, polls sensors and gathers sensed data from sensors. It improves the scalability issue of large-scale sensor networks. Sayyed et al. in \cite{Sayyed:2015} investigate the utility of speed control mobile sink for collecting data in WSN. Single-hop clustering technique is used to increase the data collection rate as well as to decrease the data collection latency.

%%%% Rendezvous Point based data gathering %%%%%%%%%

To overcome the delay due to the slow speed of the $MS$, a subset of sensors are selected as rendezvous nodes. These nodes are used to buffer the data temporarily from the nearby sensors. When the $MS$ visits these rendezvous nodes, then they transfer their data to the mobile sink. In \cite{Alhasanat:2014}, sensors are grouped into single hop clusters, and  the mobile sink visits the centroids of these clusters. If the tour length of the set of centroids is greater than a given upper bound, then some of the clusters are removed until the tour length is less than the upper bound. In \cite{Almiani:2016}, a shortest path tree rooted at the initial position of the mobile sink is built, and then a sensor node having sufficient energy as well as many nearby sensors within its vicinity is chosen as the next rendezvous node. In this way, a set of rendezvous nodes are selected and then a travelling salesman tour is obtained over the selected rendezvous nodes. In \cite{Kaswan:2017}, k-means clustering with a weight function is used for finding the rendezvous points and an efficient tour among the rendezvous points is determined for the $MS$. In addition, an efficient data gathering scheme is also proposed to reduce the total packet drop. In \cite{Trapasiya:2017}, rendezvous nodes are selected using set covering problem. The $MS$ tour is scheduled to pass through those rendezvous points. They introduce novel rendezvous node rotation scheme for fair utilization of all the nodes.  Konstantopoulos et. al. in \cite{Konstantopoulos:2018} use multiple mobile sinks to ensure timely delivery of data to the base station. Mobile sinks visit only a subset of rendezvous points while the remaining sensors forward their data to the rendezvous points through multi-hop communication. The proposed approach increases network lifetime by finding tour passing through energy-rich zones as well as through regions where energy consumption is high.

%%%  Path-Constrained Mobile Sink Mobility  %%%

In some scenarios, the trajectory of the $MS$ is predefined to a fixed path. Efficient data collection algorithms are proposed to improve network performance.  In \cite{Gao:2010, Gao:2011}, data gathering algorithms are proposed for such cases to improve network performance. Data forwarding paths from the sensors to the sub-sinks are determined to maximize the data collection and balance the energy consumption. Huang et. al. in \cite{Huang:2016} consider a scenario where a label of importance is assigned to each sensing region. A path-constrained ground vehicle with adaptive speed is used to collect data from the sensing field. Although the approach tries to improve the data collection throughput, their speed control algorithms are reactive due to the adaptive nature of speed learning characteristics of the $MS$. Besides, these algorithms don't have any specific solution for controlling or optimizing the speed of the $MS$ to improve the data collection rate and minimize delay. In article \cite{Kumar:2017}, a deterministic algorithm is proposed for maximization of data collection using fixed speed mobile sink. However, it may not provide quality data collection due to the mismatch between the data available to the gateways and the data communication time between the gateways and $MS$. Maximizing the data collection throughput in rechargeable sensor networks is addressed in \cite{Mehrabi:2016}. Zhang et. al. \cite{Zhang:2017} maximize data collection while maintaining the fairness of the network in rechargeable sensor networks.  In \cite{Dash:2018, Kumar:2018}, data gathering protocols are proposed from path-constrained mobile sensors. The major drawback of the $MS$ based system is its slow speed, which causes long data gathering delay. Since sensors have limited memory, it causes buffer overflow in the sensors. To avoid buffer overflow, multiple mobile sinks are deployed and they periodically collect data from the mobile sensors and deliver the collected data to the base station.

%%%%%%%%%%  Speed Control Algorithm %%%%%%%%%%%%%%%%%%

%The works in \cite{Mohandes:2014} develop a motion planning and control scheme for an automated firefighting system. It controls the mobile sinks so that the data uncertainty is minimized, and guarantees the safety of the system. The speed control algorithm used is proactive and based on the communication range and data collection time. 

%%%%%%%%%%%%%% Novelty / Research GAP %%%%%%%%%%%%%%%%
 
 It can be noted that several data gathering techniques have been proposed which focus on reducing the data gathering time of the mobile sink. The existing literature on path constrained mobile sink mostly consider efficient data forwarding mechanism from the sensors to the mobile sink through the sub-sinks to improve the network performances. But, no existing works consider controlling the data transmission schedule of the sub-sinks to the $MS$ along with the speed of the $MS$ and the sensor's data distribution among sub-sinks to improve the total data collection and the total data gathering time of the mobile sink.

%The existing literature on path constrained mobile sink consider efficient path selection, or efficient data forwarding mechanism from the sensors to the mobile sink through the communication topology to improve the performance of the network. Data are forwarded to the mobile sink through the sub-sinks. In this article, we consider a problem, where the path of the mobile sink is given. However, the speed of the mobile sink along the path is adjustable and data forwarding path of the sensors to the mobile sink is also undefined. Our objective is to find speed-schedule of the mobile sink on $P$, along with data distribution from sensors to the sub-sinks  such that the $MS$ can collects complete data from the sub-sinks in minimum time.

%%%%%%%%%%%%%%%%%%%%%%%%%%%%%%%%%%%%%%%

\section{System Model and Problem Formulation} \label{sec:problemstatement}

We consider a wireless sensor network (WSN) which consists of a set of sensors $N = \{s_1, s_2, \ldots, s_{n} \}$. Sensor $s_i$ generates/senses $DG(s_i)$ amount of data from its environment. The communication topology of the network is modelled as an undirected graph $G (N, E)$. The communication regions of the sensors are modelled as disks. There is a mobile sink $MS$ moving on a given path $P$. We assume that the path $P$ is approximated as piecewise straight line segments. The $MS$ can move with a given maximum speed value $V$ to collect data from the sub-sinks. However, the $MS$ can change its speed depending upon the data availabilities of the sub-sinks. The $MS$ can collect data from sensors whose communication disks intersect the path $P$.  Based on the relative position of the sensors with respect to $P$, sensors are divided into two groups, sub-sinks and far-away sensors. Sensors which can directly communicate with $MS$ on $P$ are referred as {\em sub-sinks} and rest of the sensors are referred as {\em far-away} sensors. The far-away sensors send their data to $MS$ through the sub-sinks. Let $\mathbb{SS} = \{ss_1, ss_2, \ldots, ss_m\}$ represent a set of sub-sinks which is a subset of $N$.  

% and energy consumption for data forwarding to $MS$ is proportional to the number of hops in which data is forwarded from source to the mobile sink. 

Furthermore, we also assume that the $MS$ and the sensors have sufficient energy and memory to collect and store all the sensed/relayed data temporarily.  The data delivery capacity of a sub-sink is the amount of data that can be delivered by the sub-sink to the $MS$. The data delivery capacity of a sub-sink $ss_i$ depends on the time $t^i$ the $MS$ allocates to $ss_i$ for its data delivery within the communication region of $ss_i$ and the data transmission rate $dtr$. We assume that the $MS$ can receive data from one sub-sink at a time. We also assume that there is no data aggregation in the network. Then, our problems are stated as follows.

% The $MS$ is well-aware of the data availability of the sub-sinks and their locations.

{\bf Problem 1:} Let the data availabilities of the sub-sinks be $DA = \{DA(ss_1), DA(ss_2), \ldots, DA(ss_m) \}$. Our objective is to find data transmission schedule of the sub-sinks to the $MS$  and a speed-schedule of the $MS$ through $P$ such that the $MS$ can collect complete data from all the sub-sinks in minimum time.

% allocate time to the sub-sinks for their data transmission to the $MS$
%  and data transmission schedule of the sub-sinks

Our second problem generalizes the previous one, where data gathering time is further improved by optimizing the data availabilities of the sub-sinks.

{\bf Problem 2:} Find an optimal data availabilities of the sub-sinks $DA = \{DA(ss_1), DA(ss_2), \ldots, DA(ss_m)\}$ by distributing the sensors' data among the sub-sinks along with their data transmission schedule to the $MS$ and the speed-schedule of the $MS$ through $P$ such that the $MS$ can collect complete data from all the sub-sinks in minimum time.

%{\bf Problem 3:} Given the data availability of the sub-sinks $DA = \{DA(ss_1), DA(ss_2), \ldots, DA(ss_m)\}$, find an optimal fixed speed of the mobile sink $MS$ so that $MS$ is able to collect complete data from the sub-sinks in minimum total time by moving with the constant fixed speed through $P$.

%%%%%%%%%%%%%%%%%%%%%%%%%%%%%%%%%%%%%%%%
\section{Background and Terminologies} \label{sec: background}

% Background & Terminologies

The far-away sensors send their data to the $MS$ through the sub-sinks. A sub-sink generates its data and receives data from other sensors and store them temporarily in its local buffer. This buffered data is delivered by the sub-sink to the $MS$ when it passes through the sub-sink's communication region. We refer this buffered data as {\em data availability} of the sub-sink.

Since the maximum speed $V$ of the $MS$ is given, a naive approach for the $MS$ is to move at this maximum speed $V$ on $P$ and visit all the sub-sinks and collect their data. But it may not collect complete data from all the sub-sinks. However, if the speed of the $MS$ can be varied according to the data availabilities of the sub-sinks, then it may improve the amount of data collection. 

For instance, the $MS$ should move at slow speed within the communication range of a sub-sink, which has more data, whereas it should move at a faster speed within the communication range of a sub-sink which has less or no data. Furthermore, the $MS$ should move with its maximum speed of $V$, when it is not under the communication range of any sub-sink or the sub-sinks do not have data to deliver. Determining the speed of the $MS$ at different position on $P$ is referred as  {\em speed-schedule} of $MS$. Note that the speed of the $MS$ can vary between $0$ to $V$. It may happen that the $MS$ is within multiple sub-sinks communication regions then one of the sub-sinks can transmit data to the $MS$. Therefore, proper time sharing among the sub-sinks is also required. We refer it as {\em data transmission schedule} of the sub-sinks. The {\em data transmission schedule} of the sub-sinks to the $MS$ together with the {\em speed schedule} of the $MS$ is called as {\em data gathering schedule} of $MS$. Our objectives are to find optimal {\em data gathering schedule} of the $MS$ through the communication regions of the sub-sinks along the path $P$ to collect complete data from the sub-sinks in minimum time. 

A possible speed-schedule for $MS$ is shown in Figure \ref{fig:ex2}. Figure \ref{fig:ex2}(a) shows the path $P$ of the $MS$ with dashed line and circles denote the communication disks of the sub-sinks. Let the $MS$ start from $S$ and end at $E$ while travelling through the path $P$. Figure \ref{fig:ex2}(b) shows the speed-schedule of the mobile-sink at different position on $P$. It shows that the speed of $MS$ is slow within the communication disks of the sub-sinks whereas it runs with its maximum speed $V$ outside the communication disks.

%Our objective is to find an optimal speed-schedule of $MS$ which collects complete data from the sub-sinks in minimum time.

%We consider two sub-problems: collect full data from all the sub-sinks in minimum time and collect maximum data within a given time-period $t$. Since, the mobile sink $MS$ may not travel the entire path $P$ within time period $t$, find a sub-path on $P$ and the speed-schedule of $MS$, where the $MS$ maximize the data collection.

%For the second problem, the mobile sink collects data for time period $t$ along the path $P$. Therefore, the part of the path $P$, which is traversed by the mobile sink will affect the set of the sub-sinks able to deliver data to the mobile sink and the amount of data collection. 

% Out of all such speed-schedules, we find a speed-schedule which gives the maximum data collection.
%\FloatBarrier

\begin{figure}[!t]
\centering 
\includegraphics[width=8cm]{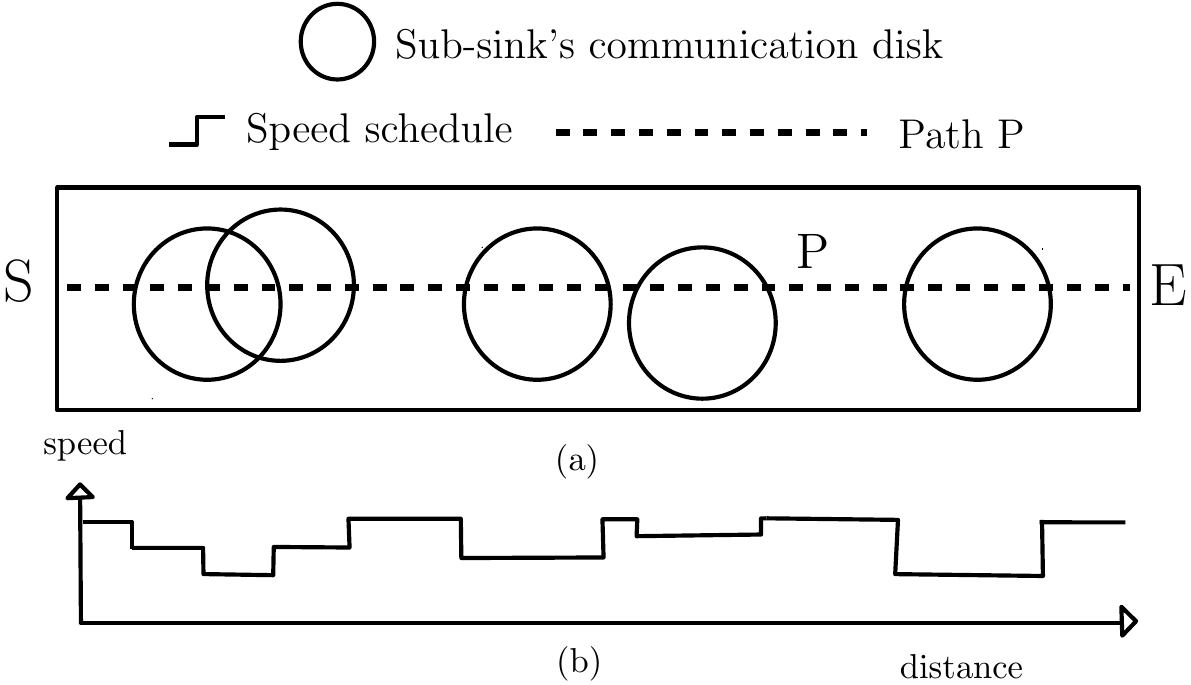}
\caption{Speed-schedule of mobile sink in sensor network}
\label {fig:ex2}
\end{figure}

We introduce some terminologies to describe our algorithm, which are as follows.

\begin{definition}{\bf Start-point ($p_i^s$) :}
It is a first point on $P$ from which a sub-sink $ss_i$ can communicate or start delivering data to the $MS$. 
\end{definition}

\begin{definition}{\bf End-point ($p_i^e$) :}
It is a last point on $P$ after which a sub-sink $ss_i$ cannot communicate or ends delivering  data to the $MS$. 
\end{definition}

%\begin{definition}{\bf Exit-point ($p_i^x$) :}
%If a sub-sink $ss_i$ delivers its complete data before the $MS$ reaches the end-point of it. We refer it exit-point of $ss_i$. After this point the mobile sink can collects data from any other sub-sink. 
%\end{definition}

\begin{definition} {\bf Data availability ($DA(ss_i)$) : }
It is the amount of data available at a sub-sink $ss_i$. 
\end{definition}

\begin{definition} {\bf Data delivery time ($DT(ss_i)$) : }
It is the minimum time requirement to transmit the data available at a sub-sink $ss_i$ to the $MS$. 
\end{definition}

Data delivery time is determined using  $DT(ss_i)$ = $\frac {DA(ss_i)}{dtr}$ formula, where $dtr$ denotes the data transmission rate between $ss_i$ and $MS$. 

%\begin{definition}{\bf Communication sub-path interval($CSP_i$):}
%It is a sub-path $[p_i^s p_i^e]$ on $P$, which is inside the communication disk of sub-sink $ss_i$.
%\end{definition}

%Exit-point of $ss_i$ can be calculated using $p_i^x = p_i^s + V*DT(ss_i)$. 

%\begin{definition}[Data speed($DS_i$) of mobile sink $MS$]
%The data speed of the $MS$ under the communication range of sub-sink $ss_i$ is based on the secant length $|CSP_i|$ of the communication disk of the $ss_i$ and the data delivery time $DT(ss_i)$ of the $ss_i$ that can be determined as $DS_i$ = $\frac {|CSP_i|} {DT(ss_i)}$. 
%\end{definition}

%\begin{definition}[Secant point]
%The secant point can be a secant start-point $p_i^s$ of a sub-sink $ss_i$ or secant end-point $ss_j^e$ of a sub-sink $ss_j$. 
%\end{definition}

%%%%%%%%%%%%%%%%%%%%%%%%%%%%%%%%%%%%%%%%%

\section{Data Gathering in Minimum Time (data availability of sub-sinks are known apriori) }
\label{sec:known_data_availability}

In this section, we first discuss linear programming problem (LPP) formulation of the proposed problem,  thereafter we discuss a plane sweep based algorithm. The mobile sink $MS$ travels through the path $P$ and collects complete data from all the sub-sinks. The data availability values of the sub-sinks are given. The $MS$ receives data from one sub-sink at a time. The objective is to collect the complete data from all the sub-sinks in minimum time.  We control the {\em data gathering schedule} of the $MS$. In other words, the time allocation of the $MS$ to the sub-sinks and the time spent by the $MS$ within their communication regions are determined based on their data availability values to minimize the data gathering process.  

\subsection{LPP Formulation}

The ordering of the start-points and end-points of the sub-sinks partition the path $P$ into disjoint segments/intervals. An example of partitioning the path $P$ into segments is shown in Figure \ref{fig:ex3}. In Figure \ref{fig:ex3}(a), a set of sub-sinks $\mathbb{SS}=\{ ss_1, ss_2 \ldots ss_5 \}$ and their start-points and end-points are shown on the path $P$. The ordering of the start-points and end-points of the sub-sinks partition the path $P$ into disjoint segments, which are shown in Figure \ref{fig:ex3}(b). Zero or more sub-sinks are reachable to the $MS$ from a particular segment. The idea of the solution is that the data gathering time within each segment is shared properly among the sub-sinks such that the $MS$ can collect complete data from all the sub-sinks through the segments and total data gathering time from the starting position $S$ to the ending position $E$ is minimum. Also, the $MS$ maintains the maximum speed limit constraint. 

\begin{figure}
\centering 
\includegraphics[width=8cm]{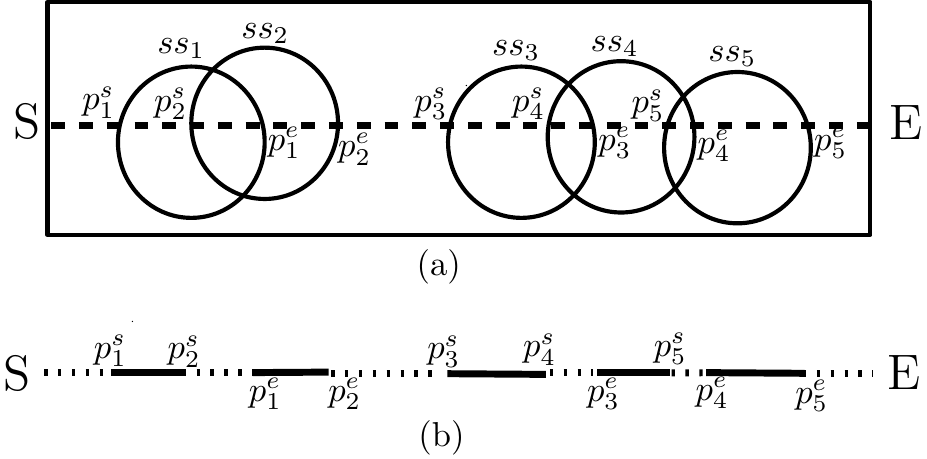}
\caption{Start-point and end-point of sub-sinks on path $P$}
\label {fig:ex3}
\end{figure}

The start-points and end-points of the sub-sinks partition the path $P$ into disjoint segments/intervals. The $m$ sub-sinks have $2m$ end-points. This will partition the path $P$ into at most $2m+1$ disjoint segments $\{ I_1, I_2, \ldots I_{2m+1} \}$. For each segment, we use a set of variables for the set of sub-sinks reachable from the segment. Within a particular segment $I_j$, the set of sub-sinks reachable to the $MS$ remains unchanged. Let $SS(I_j)$ denote the sub-sinks in $\mathbb{SS}$ reachable to the $MS$ within the segment $I_j$. Let $t_j^i$ denote the time allocated to sub-sink $ss_i \in SS(I_j)$ for transferring its data to the $MS$ on the segment $I_j$. If a segment $I_j$ is not reachable to any sub-sink, then we assume that it is reachable from a virtual sub-sink $ss_0$ which has no data, i.e. $DA(ss_0)=0$, and the time the $MS$ spends to cross the segment $I_j$ is denoted by $T(I_j)=t_j^0 \ge \frac{|I_j|}{V}$. Similarly, if from segment $I_j$ two sub-sinks $ss_i$ and $ss_k$ are reachable, i.e. $SS(I_j) = \{ss_i, ss_k \}$, then there are two variables $t_j^i$ and $t_j^k$ corresponding to two sub-sinks for the segment $I_j$. Each variable value denotes the amount of time allocated to the corresponding sub-sink for data delivery when the $MS$ travels through the segment $I_j$. 

There are two types of constraints (i) time spent on each segment $I_j$ by the $MS$ is at least the travelling time $\frac{|I_j|}{V}$, and (ii) total time $t^i = \sum_{j=1}^{2m+1}{ t_j^i } : ss_i \in SS(I_j)$, allocated by the $MS$ to a sub-sink $ss_i$ for its data delivery, must be greater than or equal to the sub-sink's data delivery time $DT(ss_i)$. The LPP formulation of the said problem is shown in Equation \ref{equ:prob1}.

%%%%%%%%%%%%%%%%%%%%%%%%%%%%%%%%%%%%%%%%

\begin{equation} \label{equ:prob1}
\begin{aligned}
\text{Minimize : }   & { \sum \limits_{j=1}^{2m+1}{\sum \limits_{ i : ss_i \in SS(I_j) }{ t_j^i }} }     \\
\text{Subject to : } & { \sum \limits_{ i  :  ss_i \in SS(I_j) }{ t_j^i } \ge \frac{|I_j|}{V} , }  \quad  j=   1 \dots (2m+1)  \\
                            & { \sum \limits_{j  :  ss_i \in SS(I_j) }{ t_j^i } \ge DT(ss_i) , }  \quad  i=   1 \dots m \\
                            & { t_j^i  \ge 0, \quad   i=0 \dots m,  \quad j=  1 \dots (2m+1) } 
\end{aligned}
\end{equation}

%%%%%%%%%%%%%%%%%%%%%%%%%%%%%%%%%%%%%%%%

After solving the LPP in Equation \ref{equ:prob1},  $t_j^i , \quad i= 0 \dots m,  j=  1 \dots (2m+1) $ are known, which denote the data transmission schedule of the sub-sinks. The lengths of the segments $I_i,  i=1 \dots (2m+1)$ are already derived from the start-points and end-points. Hence, the speed of the $MS$ at different segments can be determined easily. The following subsection discusses a plane sweep based algorithm for the problem.

% Optimal Speed Schedule Algorithm for MS
\subsection{Plane Sweep Algorithm}

The mobile sink $MS$ moves through the path $P$. When the $MS$ is within the multiple sub-sinks' communication range, then the $MS$ receives data from only one of them by prioritizing them according to their end-points positions on $P$. The sub-sink whose end-point appears first on $P$ has higher priority than that sub-sink whose end-point appears later. Let $PR(ss_i)$ denote the priority of a sub-sink $ss_i$.

In the plane sweep algorithm, it is simulated by moving a sweep line through the path $P$. We consider a horizontal data gathering path $P$ for the $MS$ and a virtual vertical line perpendicular to $P$, called {\em sweep line} moves (sweeps) through the path $P$ from $S$ to $E$. While sweeping the sweep line intersects the sub-sinks' communication disks. We have defined two types of events : start-point event and end-point event for every sub-sink. Start-points and end-points of the sub-sinks are stored in an event queue $Q$ according to their appearance on $P$ from left to right. At a particular position of the sweep line on $P$, we maintain a list of sub-sinks in a {\em status line} data structure $L$.  The sub-sinks whose communication disks intersect the sweep line on $P$ are in $L$.  At a particular position on $P$, if multiple sub-sinks' communication disks intersect the sweep line on $P$ and they have data, then a sub-sink $ss_i$ in $L$ with maximum priority gets the preference to deliver data to $MS$.

Initially, all the start-points and end-points of the sub-sinks are added to the event queue $Q$. We are calling three methods to perform different operations on the event queue $Q$. $InsertInQ()$ method is used for inserting an event, $RemoveFromQ()$ method removes the leftmost event on $P$, and $PeekFromQ()$ method retrieves the leftmost event but does not remove it from the queue. Similarly, three methods $InsertInL()$, $RemoveFromL()$, and $PeekFromL()$ are used to perform three different operations on the status line data structure $L$. Events are processed one by one from the event queue $Q$, as the {\em sweep line} moves through the path $P$. The top event is removed from $Q$ and is referred as current event $CE$. A sub-sink $ss_i$ is inserted into $L$, whenever the sweep line processes its start-point $p_i^s$. If the sweep line is processing an end-point $p_i^e$ of sub-sink $ss_i$ and the complete data of $ss_i$ is not yet delivered, then the $MS$ waits at the end-point $p_i^e$ and receives the remaining data from $ss_i$. Subsequently, the sub-sink $ss_i$ is removed from $L$. Thereafter, the next event point $NE$ is picked from the event queue $Q$. Travel time $TT$ of the $MS$ between current event $CE$ and the next event $NE$ is determined assuming that the $MS$ moves with its maximum speed $V$ in between the two events. Thereafter, maximum priority sub-sink $ss_j$ is picked from $L$. If the data transmission time $DT(ss_j)$ of the sub-sink $ss_j$ is $ \le TT$, then the sub-sink $ss_j$ completes data delivery to the $MS$ between the two events. The sub-sink $ss_j$ is removed from $L$. Subsequently, the next highest priority sub-sink in $L$ is picked for data delivery. This process continues until the sweep line reaches another event point or the data delivery process is completed. If there is no sub-sink in $L$ having data to deliver then the $MS$ moves with its maximum speed $V$. The detailed algorithm is presented in Algorithm \ref{Algo:1}. 

%If the data delivery of $ss_i$ is over before the sweep line reaches its end-point $p_i^e$, then the next priority sub-sink from the status line $L$ gets the preference for delivering data to $MS$. 

% If the data transfer of the first sub-sink is over before the $MS$ reaches first sub-sink's end-point then the second sub-sink starts delivering data to $MS$. Otherwise, if the $MS$ reaches the first end-point and data delivery of the first sub-sink is not over then the $MS$ waits for remaining data delivery time of the first sub-sink to collect full data from it. Thereafter, it moves with maximum speed until it reaches the second end-point. This process is continued until it reaches the end of the path $E$. 

In Figure \ref{fig:ex3}(a), the $MS$ starts its journey from $S$ with speed $V$. As it reaches $p_1^s$, then the sub-sink $ss_1$ is inserted into $L$. Thereafter, $MS$ starts receiving data form $ss_1$ until it reaches $p_2^s$. If the data delivery of $ss_1$ is not over, then there are two sub-sinks $ss_1$, $ss_2$ reachable to $MS$ within segment $[p_2^s p_1^e]$. As the end-point $p_1^e$ appears before $p_2^e$, therefore, according to our algorithm, sub-sink $ss_1$ gets the privilege to deliver its remaining data to $MS$ within the segment $[p_2^s p_1^e]$. If the data delivery of $ss_1$ is still not over within $\frac{|p_1^s p_1^e|}{V}$ time, then the $MS$ waits at point $p_1^e$ for the remaining data delivery time for the duration of $DT(ss_1) - \frac{|p_1^s p_1^e|}{V}$ time. Otherwise, the $MS$ starts receiving data from $ss_2$ after crossing the start-point $p_2^s$ and allocating $DT(ss_1)$ time to $ss_1$. In this way, the $MS$ either moves with its maximum speed or waits at the end-points of the sub-sinks until it reaches the end of the path $E$.

%In general, after reaching an end-point of a sub-sink $ss_i$, if $ss_i$ does not finish delivering its complete data $DA(ss_i)$ to $MS$, due to the fact that the sub-sink $ss_i$ does not get its data delivery time $DT(ss_i)$ to deliver its data, then $MS$ must wait at $p_i^e$ until the complete data is delivered from $ss_i$ to $MS$. Otherwise, the $MS$ moves with its maximum speed $V$ along the path $P$ and receives data from the maximum priority sub-sink, which has data to deliver. 

% If $p_i^x \le p_i^e$ then the exit point is added to the event queue.

\begin{algorithm}
\caption{Plane sweep algorithm for data gathering using $MS$}
\label{Algo:1}

\KwData{ Location($ss_i$) and $DA(ss_i)$ $\forall{ss_i \in \mathbb{SS} }$, $P$, $V$, $dtr$}
\KwResult{ Data Transmission Schedule of the sub-sinks, and Speed Schedule of $MS$}

$\forall{ss_i \in \mathbb{SS} }$ : Compute start-point ($p_i^s$) and end-point ($p_i^e$)  with respect to $P$ \;

%Compute data availability $DA(ss_i)$ for all sub-sinks $ss_i \in \mathbb{SS}$ using shortest path routing from the far-away sensors \; 

$Q= \emptyset$, $m= |\mathbb{SS}|$ \;

\tcc{Initialize event queue $Q$ with start-points and end-points  }

\For {$i =1$ to $m$} {

InsertInQ( $p_i^s$ ); InsertInQ($p_i^e$)\;

 $DT(ss_i)= \frac{DA(ss_i)}{dtr}$;  \tcc{Data delivery time of $ss_i$} 
}

$L=\emptyset$;  \tcc{Initialize status line $L$ }

The $MS$ moves with its maximum speed $V$ from $S$ to the next end-point event, or until it reaches end of the path $E$ \;

\While{ $Q \ne \emptyset $ } {

	$CE = RemoveFromQ()$  \;  %\tcc{current event}
	
	\uIf{ $CE$ = $p_i^s$ } {
	
	      %\tcc{Insert sub-sink $ss_i$ in appropriate position in $L$ based on its priority}
		InsertInL($ss_i$ )  \;
	}
	
	\uElseIf{ $CE$ = $p_i^e $  $\land$  $DT(ss_i) >0$ } {
		
		$MS$ stops and receives remaining data of $ss_i$ \; 
		$DT(ss_i)=0$  \;

		RemoveFromL($ss_i$) ; %\tcc{ Remove sub-sink $ss_i$ from $L$} 
		
		% if a sub-sink $ss_j$ with maximum priority exists in $L$ then start receiving data from $ss_j$ \;
	}
	
	\tcc{ If $L \ne \emptyset$ then select a sub-sink $ss_j$ with maximum priority from $L$ and $MS$ starts receiving data from $ss_j$ }
	
	$NE=PeekFromQ()$ ;  \tcc{ next event}
	
	$TT = \frac{dist(CE,NE) }{V}$ ;   \tcc{ travel time between $CE$ and $NE$ }
	
	$ss_j=  PeekFromL()$;  % \tcc{maximum priority sub-sink} 
	
	\While{ $L \ne \emptyset$  $\land$  $DT(ss_j) \le TT$ }{
	
		  $MS$ moves with speed $V$ and receives data from  $ss_j$ for $DT(ss_j)$ time  \;
	
	           $TT = TT - DT(ss_j)$ \;
		
	           $DT(ss_j) =  0$ \;
	
	           $RemoveFromQ(p_j^e)$ \;
	
	           $RemoveFromL(ss_j)$ \;
	
	           $ss_j=  PeekFromL()$ \;
	
	         }

	          \uIf { $L \ne \emptyset$  $\land$  $DT(ss_j) > TT$  }{
		 
			$DT(ss_j) = DT(ss_j) - TT$ \;

			$MS$ moves with speed $V$ and continue receiving data from  $ss_j$ for $TT$ time \;
		      }
		 \uElse {

		         $MS$ moves with speed $V$ without receiving any data to next event for $TT$ time \;
		        
			 }

}  % End event queue while loop

\end{algorithm}

\begin{corollary} 
Data gathering sub-paths of the mobile sink $MS$ on $P$ from a sub-sink $ss_i$ is confined within $[p_i^s, p_i^e]$ for $i \in \{1 \ldots m \}$. 
\label{cor:cor1}
\end{corollary}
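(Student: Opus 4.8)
The plan is to argue directly from the definitions of the start-point and end-point together with the invariant maintained by Algorithm~\ref{Algo:1} on the status line $L$. By the definition of the start-point $p_i^s$ and the end-point $p_i^e$, the sub-sink $ss_i$ is physically able to communicate with the $MS$ exactly when the $MS$ lies in the closed interval $[p_i^s, p_i^e]$; outside this interval the communication disk of $ss_i$ does not intersect $P$, so no data can flow. Hence it suffices to show that the algorithm never attempts to collect data from $ss_i$ while the sweep line lies outside $[p_i^s, p_i^e]$.

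First I would establish the following invariant: at any position $x$ of the sweep line on $P$, a sub-sink $ss_i$ belongs to $L$ only if $p_i^s \le x \le p_i^e$. This follows by tracking the two operations that modify $L$. The sub-sink $ss_i$ is inserted into $L$ exactly once, when the current event $CE$ equals its start-point $p_i^s$; and it is removed from $L$ either in the inner while-loop as soon as its data delivery time $DT(ss_i)$ reaches $0$, or at the latest when $CE = p_i^e$ is processed (in which branch the $MS$ first stops at $p_i^e$ to collect any residual data and then removes $ss_i$). Because events are processed in left-to-right order along $P$ and $p_i^s < p_i^e$, the membership of $ss_i$ in $L$ spans exactly the sweep-line positions in $[p_i^s, p_i^e]$, possibly shrinking to a sub-interval if the data is exhausted early.

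Next I would connect data collection to membership in $L$. In the algorithm the $MS$ receives data from a sub-sink in only two places: in the inner while-loop, where the chosen sub-sink $ss_j$ is obtained via $PeekFromL()$ and hence lies in $L$ and is served over the current segment $[CE, NE]$; and in the end-point branch, where residual data of $ss_i$ is collected with the $MS$ stopped precisely at $p_i^e$. In the first case, since $ss_j \in L$ we have $p_j^s \le CE$ and, because $p_j^e$ is still pending in $Q$, also $NE \le p_j^e$, so the serving segment $[CE, NE]$ is contained in $[p_j^s, p_j^e]$. In the second case the collection occurs at the single point $p_i^e$, which is the right endpoint of the interval. Combining the two cases, every sub-path over which the $MS$ gathers data from $ss_i$ lies within $[p_i^s, p_i^e]$, which is the claim.

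The main obstacle, such as it is, lies in handling the boundary case cleanly: when $ss_i$ still has undelivered data at its end-point, the $MS$ halts at $p_i^e$ and finishes collection there, so one must verify that this stop takes place at $p_i^e$ and not past it, and that the interval is to be read as closed. Once the $L$-membership invariant is stated and the two collection sites are exhausted, the result follows without further computation.
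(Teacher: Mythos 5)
Your proposal is correct; the paper itself offers no explicit proof of this corollary, treating it as immediate from the definitions of the start-point $p_i^s$ and end-point $p_i^e$ (outside $[p_i^s,p_i^e]$ the communication disk of $ss_i$ does not meet $P$, so no data can flow), which is exactly the observation in your first paragraph. Your additional verification via the $L$-membership invariant and the two collection sites in Algorithm~\ref{Algo:1} is a sound, slightly more detailed elaboration of the same argument rather than a different route.
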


%%%%%%%%%%%%%%%%%%%%%%%%%%%%%%%%%%%%%%

\begin{theorem}
If the mobile sink $MS$ follows the Algorithm \ref{Algo:1} for data gathering, then it receives complete data from all the sub-sinks. 
\label{th:th1}
\end{theorem}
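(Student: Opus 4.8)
The plan is to establish correctness through a loop invariant on the status line $L$, namely that \emph{a sub-sink is removed from $L$ only after its entire remaining data has been delivered to the $MS$}. Together with the fact that every sub-sink inserted into $L$ is eventually removed, this invariant immediately forces each $ss_i$ to hand over its full data $DA(ss_i)$, which is exactly the claim.

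First I would enumerate every place in Algorithm~\ref{Algo:1} where $ss_i$ enters or leaves $L$. A sub-sink is inserted exactly once, when the sweep line processes its start-point event $p_i^s$. It is removed in only two spots: (a) inside the inner \textbf{while} loop, guarded by $DT(ss_j) \le TT$, where the full remaining data is transmitted during the travel time $TT$ at speed $V$ and $DT(ss_j)$ is set to $0$ immediately before $RemoveFromL(ss_j)$; and (b) in the end-point branch $CE = p_i^e$, where the $MS$ first stops and absorbs the remaining data, forcing $DT(ss_i)=0$, and only then calls $RemoveFromL(ss_i)$. In both spots the remaining delivery time is zero at the instant of removal, so the invariant holds.

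Next I would show that the two removal mechanisms are consistent, so that no sub-sink can slip out of $L$ with positive remaining data. The key is that case (a) additionally executes $RemoveFromQ(p_j^e)$, deleting the end-point event of a fully served sub-sink; hence such a sub-sink never triggers the end-point branch later. Conversely, if $ss_i$ still carries data when its end-point event is dequeued, then $p_i^e$ was never removed in case (a), so $ss_i$ is still present in $L$ and branch (b) applies with $DT(ss_i)>0$. I would then invoke Corollary~\ref{cor:cor1} to certify that this entire exchange for $ss_i$ occurs inside the admissible window $[p_i^s,p_i^e]$, so the resulting schedule is geometrically feasible.

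Finally I would close with termination and completeness. Since $Q$ starts with the $2m$ start- and end-point events and every outer iteration removes at least one event (while the inner loop only ever deletes events), the \textbf{while} $Q \ne \emptyset$ loop halts after finitely many steps, and every event that is not consumed in case (a) is eventually processed. Consequently each $ss_i$ is inserted into $L$ and later removed with $DT(ss_i)=0$, i.e. it delivers its complete data. The main obstacle I expect is precisely the bookkeeping of the third paragraph: verifying that the deletion of $p_j^e$ from $Q$ in the inner loop dovetails with the end-point branch so that exactly one removal mechanism fires per sub-sink, and that the priority rule $PR(\cdot)$ never evicts an active sub-sink that still carries data. I would emphasize that completeness does \emph{not} depend on the priority ordering; starving a low-priority sub-sink merely postpones its data to the waiting step at its end-point, so the priority rule is relevant only to the later optimality/minimum-time arguments, not to this theorem.
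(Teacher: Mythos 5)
Your proposal is correct and follows essentially the same route as the paper's own proof: the paper likewise argues that a sub-sink is removed from $L$ only after its full data delivery time $DT(ss_i)$ has been allocated, that every sub-sink enters $L$ because all start-point events in $Q$ are processed, and that emptiness of $Q$ and $L$ at termination forces complete collection. Your version merely makes explicit the bookkeeping the paper leaves implicit (the dovetailing of the inner-loop removal with the end-point branch via $RemoveFromQ(p_j^e)$, and the observation that the priority rule matters only for the minimum-time theorem, not for completeness).
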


\begin{proof}
Algorithm \ref{Algo:1} selects the highest priority sub-sink in $L$ for data delivery to the $MS$. A sub-sink $ss_i \in \mathbb{SS}$ is removed from $L$ only when the $MS$ finishes receiving its data by allocating $DT(ss_i)$ time to $ss_i$. The time allocation may be continuous or discontinuous. A sub-sinks $ss_i$ is inserted to $L$ whenever the $MS$ crosses $p_i^s$. Since all the star-points and end-points of the sub-sinks are in $Q$ and are processed. Therefore, all the sub-sinks get a chance to be in $L$. Once the algorithm ends then the event queue $Q$ and the list $L$ become empty. Therefore, all the sub-sinks must have delivered their complete data to the $MS$.  
\end{proof}

%%%%%%%%%%%%%%%%%%%%%%%%%%%%%%%%%%%%%%%

\begin{theorem}
The mobile sink $MS$ completes the data gathering process in minimum time by following Algorithm \ref{Algo:1}. \label{th:th2}
\end{theorem}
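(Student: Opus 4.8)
The plan is to show that Algorithm~\ref{Algo:1} attains the optimum of the linear program in Equation~\ref{equ:prob1}, whose value is by definition the minimum data-gathering time. By Theorem~\ref{th:th1} the schedule produced by the algorithm is feasible (all data is delivered) and it respects the speed limit, so it yields a feasible point of Equation~\ref{equ:prob1}; it remains to prove that it is a minimizer. First I would normalize the form of an arbitrary schedule: using Corollary~\ref{cor:cor1}, each sub-sink $ss_i$ is served only while the $MS$ lies in $[p_i^s,p_i^e]$, and I would argue that we may restrict attention to bang--bang motions in which the $MS$ either moves at full speed $V$ (possibly while receiving data) or is stationary (receiving data). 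Any slower-than-$V$ motion over a stretch can be replaced by traversing that stretch at speed $V$ and inserting an equal stop at the same position, which leaves both the total time and the served amounts unchanged. Under this normalization the total time decomposes as $T=\frac{|P|}{V}+W$, where the first term is fixed and $W$ is the total stationary (waiting) time; hence minimizing the gathering time is equivalent to minimizing $W$.

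Second, I would establish a combinatorial lower bound that the algorithm can meet. For any position window $[x,y]\subseteq P$, every sub-sink $ss_i$ with $[p_i^s,p_i^e]\subseteq[x,y]$ can deliver data only while the $MS$ is inside $[x,y]$; hence the time the $MS$ spends inside $[x,y]$ is at least $\max\bigl(\frac{y-x}{V},\sum_{i:[p_i^s,p_i^e]\subseteq[x,y]}DT(ss_i)\bigr)$, while the remaining stretch of $P$ costs at least $\frac{|P|-(y-x)}{V}$ in travel. Adding these two contributions gives, for every such window,
\[
T \ge \frac{|P|}{V} + \Bigl(\sum_{i:[p_i^s,p_i^e]\subseteq[x,y]}DT(ss_i)-\frac{y-x}{V}\Bigr)^{+},
\]
so any feasible schedule satisfies $T\ge \frac{|P|}{V}+\Delta$, where $\Delta$ is the maximum of the bracketed excess over all windows $[x,y]$.

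Third, I would show that Algorithm~\ref{Algo:1} realizes $W=\Delta$. The key structural property is that the algorithm is earliest-end-point-first: it never idles while some reachable sub-sink still has data, and it waits only at an end-point $p_i^e$, and then only for the residual amount of $ss_i$ that the available full-speed travel could not absorb. I would argue, by an exchange argument on the priorities, that whenever the algorithm is forced to wait at some $p_i^e$, the sub-sinks responsible for that wait delimit a window $[x,y]$ whose excess demand equals the accumulated wait, so the wait is unavoidable in every schedule. Chaining these saturated windows shows that the algorithm's total wait never exceeds the lower bound $\Delta$; since the schedule is feasible the wait also cannot be smaller, so $W=\Delta$ and the algorithm is optimal. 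As an alternative finish, one may instead exhibit a dual-feasible solution of Equation~\ref{equ:prob1} supported on these saturated windows and verify complementary slackness against the algorithm's primal solution.

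I expect the main obstacle to be the third step: correctly handling the coupling between waiting and the shifting of later events (each stop delays all subsequent arrivals by the same amount, yet without harming the later communication windows), and turning the earliest-end-point-first behaviour into a clean statement that every forced wait is charged to a genuinely saturated window. The exchange argument must show that no reordering of transmissions among overlapping communication regions can remove a wait that the greedy incurs, which is precisely the content that makes earliest-deadline-first optimal; particular care is needed to ensure that the preemption implicit in the discontinuous time allocation noted in Theorem~\ref{th:th1} does not create spurious idle time in the accounting.
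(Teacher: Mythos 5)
Your overall architecture (feasibility via Theorem~\ref{th:th1}, an explicit lower bound, then showing the greedy attains it) is a genuinely different and more explicit route than the paper's own proof, which is a short local exchange argument: the paper merely observes that within any $[p_i^s,p_i^e]$ Algorithm~\ref{Algo:1} only ever serves sub-sinks $ss_j$ with $PR(ss_j)\ge PR(ss_i)$, so no time that could have reduced the wait at $p_i^e$ is wasted, and concludes by contradiction. However, your step~2 contains a concrete error that step~3 then inherits. The quantity $\Delta$ you define as the maximum excess over \emph{single} windows is a valid lower bound on the waiting time $W$, but it is not tight, so the identity $W=\Delta$ that step~3 sets out to prove is false in general. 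Take $V=1$, $dtr=1$, a path of length $10$, and two sub-sinks with disjoint communication intervals $[0,1]$ and $[3,4]$, each with $DT=2$. Every schedule must spend at least $2$ time units inside each unit-length interval, so $W\ge 2$ (and the algorithm achieves $W=2$); yet the excess of $[0,1]$ and of $[3,4]$ is $1$ each, and any window containing both intervals has excess $4-(y-x)\le 0$, so $\Delta=1<W$. The correct lower bound must range over \emph{families of pairwise disjoint windows} and sum their positive excesses: $T\ge \frac{|P|}{V}+\max\sum_k\bigl(\sum_{i:[p_i^s,p_i^e]\subseteq[x_k,y_k]}DT(ss_i)-\frac{y_k-x_k}{V}\bigr)^{+}$. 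Your phrase ``chaining these saturated windows'' gestures at this, but as written the bound you propose to match is the wrong one.

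With the corrected bound your plan does go through and is the standard earliest-deadline-first optimality argument: for each end-point $p_i^e$ at which the algorithm waits, let $x$ be the last position before $p_i^e$ at which the $MS$ was idle or serving a sub-sink with end-point beyond $p_i^e$ (or $S$ if none exists); the priority rule forces every sub-sink served in $[x,p_i^e]$ to have its entire interval contained in $[x,p_i^e]$, the maximal such windows are pairwise disjoint, and the excess of each equals the total wait the algorithm incurs inside it. Carrying out this charging explicitly (or, as you suggest, exhibiting the corresponding dual solution of Equation~\ref{equ:prob1} supported on these saturated windows) would actually strengthen the paper, whose own proof leaves exactly this accounting informal. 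Your step~1 normalization is fine provided the slower-than-$V$ stretch is first subdivided at event points so that the reachable set of sub-sinks is constant on each piece.
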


\begin{proof}
Assume for the sake of contradiction that the $MS$ does not complete the data gathering process in minimum time. According to our algorithm, the $MS$ moves with its maximum speed $V$ throughout the path except at some end-points. So,  there is an extra delay at some end-points. Extra delay for receiving data from a sub-sink $ss_i$ is possible only when the $MS$ waits at $p_i^e$, but for some sub-path of $[p_i^s, p_i^e]$, the $MS$ moves without receiving data from any sub-sink or receives data from a sub-sink $ss_j$, whose end-point  $p_j^e$ appears after $p_i^e$. This is because the sub-sinks, whose end-points appear after $p_i^e$ can deliver data beyond $[p_i^s, p_i^e]$, and may overall reduce the waiting time at $p_i^e$.

%, but some sub-path of $[p_i^s, p_i^e]$ the $MS$ moves/waits without receiving data from any sub-sink $ss_j \in L$ including $ss_i$, where $PR(ss_j) \ge PR(ss_i)$. This is because lower priority sub-sink can deliver data beyond $[p_i^s, p_i^e]$, that may overall reduce the waiting at $p_i^e$.

%The $MS$ can receive data from any sub-sink in $L$, which are reachable and have data to deliver. The sub-sinks are prioritized based on their end-points, and the sub-sink whose end-point appears first on $P$ has the highest priority for data delivery. If the complete data of the highest priority sub-sink is not delivered till $MS$ reaches its end-point, then $MS$ waits at the end-point. Otherwise, $MS$ receives data from subsequent highest priority sub-sink in $L$, or if no such sub-sink is available in $L$ then $MS$ moves with its maximum speed $V$ without receiving data from any sub-sink. 

According to Algorithm \ref{Algo:1}, once the $MS$ enters $[p_i^s, p_i^e]$, it either receives data from $ss_i$, or any other sub-sink $ss_j$ such that $PR(ss_j) \ge PR(ss_i)$ in $L$. This implies $p_j^e$ appears before $p_i^e$.  Therefore, there is no sub-path within $[p_i^s, p_i^e]$ where the $MS$ moves/waits without receiving data from any sub-sink $ss_j \in L$, where $PR(ss_j) \ge PR(ss_i)$ and waits at $p_i^e$. Hence, the $MS$ does not make extra delay at any end-point and completes the data gathering process in minimum time.
\end{proof}

\begin{theorem} 
Time complexity of the plane sweep algorithm \ref{Algo:1} is $O(m \log{m})$.  \\
\label{th:timecomplexity1}
\end{theorem}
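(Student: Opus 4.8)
The plan is to obtain the bound by a standard plane-sweep accounting argument: fix the cost of each abstract operation on the two data structures, count how many times each operation is invoked over the whole run, and sum. First I would pin down the implementations. The event queue $Q$ holds the $2m$ start- and end-points ordered by their position along $P$; implemented as a balanced search tree (or heap with handles) it supports $InsertInQ$, $RemoveFromQ$, $PeekFromQ$, and the keyed deletion $RemoveFromQ(p_j^e)$ each in $O(\log m)$ time. The status line $L$ holds the currently active sub-sinks keyed by priority $PR(\cdot)$, i.e.\ by end-point order; implemented the same way, $InsertInL$, $RemoveFromL$, and $PeekFromL$ (retrieving the maximum-priority element) each cost $O(\log m)$.

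Next I would tally the operation counts. The preprocessing loop runs $m$ times, computing each start-point and end-point in $O(1)$ geometric work and performing two $InsertInQ$ calls plus one division, so initialization costs $O(m \log m)$. The outer \textbf{while} loop removes one event from $Q$ per iteration and never reinserts, so it executes exactly $2m$ times, each iteration doing $O(1)$ many $O(\log m)$ operations (one $RemoveFromQ$, one $PeekFromQ$, possibly one $InsertInL$, $RemoveFromL$, or $PeekFromL$), contributing $O(m \log m)$.

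The delicate part is the inner \textbf{while} loop, whose iteration count is not bounded per outer iteration. Here I would use an aggregate (amortized) argument: every iteration of the inner loop ends with a $RemoveFromL(ss_j)$ that permanently evicts a sub-sink from $L$, and by Theorem~\ref{th:th1} each of the $m$ sub-sinks is inserted into $L$ exactly once and hence removed at most once. Therefore the total number of inner-loop iterations summed over the entire algorithm is at most $m$, and each such iteration performs $O(1)$ operations of cost $O(\log m)$ (notably the keyed $RemoveFromQ(p_j^e)$ and the $RemoveFromL$), so all inner-loop work together is $O(m \log m)$. Adding the three contributions yields the claimed $O(m \log m)$ bound. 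I expect the main obstacle to be precisely this inner-loop step: the naive worry is that the nested loops multiply to $O(m^2)$, and the proof must make explicit that the inner loop is charged globally against the at-most-$m$ permanent evictions from $L$ rather than locally to each outer iteration, which also requires justifying that $RemoveFromQ(p_j^e)$ is a genuine $O(\log m)$ keyed deletion rather than a linear search.
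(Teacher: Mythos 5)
Your proposal is correct and follows essentially the same aggregate-accounting argument as the paper: $2m$ event-queue operations plus at most $m$ insertions, $m$ deletions, and $O(m)$ peeks on the status line, each costing $O(\log m)$, for a total of $O(m\log m)$. You are somewhat more explicit than the paper about charging the inner loop's iterations globally to the at-most-$m$ permanent evictions from $L$ and about the keyed deletion $RemoveFromQ(p_j^e)$ needing handles, but the underlying reasoning is the same.
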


\begin{proof}
Throughout the algorithm, an event point (start-point/end-point) of a sub-sink is inserted once and removed once in the event queue $Q$, and in total $2m$ event points are processed.  The events are processed from event queue $Q$ using a heap data structure. Inserting and then removing the event points require $O(m \log{m})$ time. During the processing of an event, some basic operations on the status line data structure $L$ are performed. In the worst case, $m$ sub-sinks are simultaneously in $L$. Therefore, the time needed to perform an insert or delete operation on the status line is $O(\log{m})$ and the peek operation takes $O(1)$ time.   

The plane sweep algorithm processes $2m$ event points for $m$ sub-sinks. In total $m$ insert and $m$ delete operations, and at most $2m$ peek operations are performed on the status line data structure $L$, and each such operation takes at most $O(\log{m})$ time.  Hence, it follows that the total time processing all the events is $O(m\log{m})$.
\end{proof}

%%%%%%%%%%%%%%%%%%%%%%%%%%%%%%%%%%%%%%%%

\section{Improving the Data Gathering Time By Optimizing The Data Availabilities of the Sub-sinks}
\label{sec:unknown_data_availability}

The solution in the previous section finds a {\em data gathering schedule} of the $MS$, where the data availability values of the sub-sinks are given. This section generalizes the problem, where data availabilities of the sub-sinks are determined to improve the data gathering time. Proper distribution of the sensors' data among the sub-sinks is carried out to improve the data gathering time. Determining an optimal data distribution among the sub-sinks is another challenging issue in WSN. The data availability values of the sub-sinks are determined using a plane sweep algorithm for the given sensor network. After determining the optimal data availability values of the sub-sinks, we consider the values as data delivery capacity of the sub-sinks and the sensors' data are pushed to those sub-sinks using network flow algorithm. Thereafter, Algorithm \ref{Algo:1} is used to complete the data gathering process in minimum time. In summary, this section discusses the solution for the Problem 2, where our objective is to distribute the sensors' data among the sub-sinks properly so that the $MS$ can collect complete data from all the sub-sinks in minimum time.

%%%%% Determine the data availability %%%%%%%%%%%
\subsection{Determining Data Availabilities of the Sub-Sinks Using Plane Sweep Algorithm} 
\label{subsec:data_availability}

In this subsection, we determine the data availability values of the sub-sinks for a given network topology. We assume that the data generated on the sensors are known, which are denoted as $DG(s_i): i=1:n$. Using a plane sweep algorithm, we determine the data availability values of the sub-sinks $DA(ss_i): i=1:m$. Initially, the sensor network is partitioned into connected components $C=\{ c_1, c_2, \ldots  c_k \}$ based on its communication topology $G$. The idea of this algorithm is that the data generated in a component is distributed among its corresponding sub-sinks so that the $MS$ can collect complete data from the component through its sub-sinks in minimum time. The $MS$ moves with its maximum speed $V$ through $P$, except at a few end-points. 

%\subsubsection{Plane Sweep Algorithm} A plane sweep based solution for computing the data availability values of the sub-sinks is discussed.

For individual connected component, the total data generated by the sensors in the corresponding component is determined. Let $\{ DG(c_1), DG(c_2), \ldots DG(c_k) \}$ denote the data generated in the components. The data availabilities of the sub-sinks are initialized to zero : $DA(ss_1)=0, DA(ss_2)=0, \ldots DA(ss_m)=0$. The start-points and end-points of the sub-sinks are determined. The sub-sinks are labelled with their corresponding component identity. Let $C(ss_i)$ denote the component identity of a sub-sink $ss_i$. The {\em last sub-sink} of a component $c_i$ denoted by $LSS(c_i)$ is a sub-sink, whose end-point appears last on $P$ among all the sub-sinks in $c_i$. For each component $c_i$, identify its last sub-sink $LSS(c_i)$. The start-points and end-points of the sub-sinks are stored in an event queue $Q$ according to their order on the path $P$. The status line data structure $L$ is initialized to $\emptyset$.  

A virtual perpendicular sweep line moves through the path $P$ and process the events one after another from the event queue. At a particular position on the path $P$ of the sweep line, it keeps track of all the sub-sinks in a status line $L$, whose communication disks intersect the sweep line on the path $P$. The priority of a sub-sink $ss_i$ in $L$ is based on the two parameters : its corresponding component's last sub-sink's end-point position, i.e. end-point of $LSS(ss_i)$, and its start-point $p_i^s$ on $P$. If two sub-sinks $ss_i$ and $ss_j$ belong to same component, i.e. $C(ss_i) = C(ss_j)$, then the sub-sink whose start-point appears first on $P$, has higher priority than the other sub-sink. If the two sub-sinks belong to different components, i.e. $C(ss_i) \ne C(ss_j)$ and the end-point of $LSS( C(ss_i) )$ appears before the end-point of $LSS(C(ss_j) )$ on $P$, then $PR(ss_i) > PR(ss_j)$. 

The top event is removed from the queue $Q$ and is referred as the current event $CE$. If $CE$ is a start-point of $ss_i$, then $ss_i$ is inserted into $L$. If $CE$ is an end-point of sub-sink $ss_i$ and it is the last sub-sink of its corresponding component $c_j$ and the component has data  ($DA(c_j) > 0$) then the data availability of $ss_i$ is increased by $DA(c_j)$.  Subsequently, the sub-sink $ss_i$ is removed from $L$. Thereafter, the next event point $NE$ is picked from the event queue $Q$. Travel time $TT$ of the $MS$ between current event $CE$ and the next event $NE$ is determined assuming that the $MS$ moves with its maximum speed $V$ in between the two events. 

Next, the maximum priority sub-sink $ss_j$ is picked from $L$. Let $C(ss_j)$ denote the component of sub-sink $ss_j$. If the remaining data availability of the component $C(ss_j)$, which is $DA(C(ss_j)) \le TT*dtr$ (data transmission capacity of $ss_j$ between the two event points), then data availability of $ss_j$ is increased by $DA(C(ss_j) )$. The sub-sink $ss_j$ is removed from $L$. The remaining travel time between the two events $CE$ and $NE$ of the $MS$ is updated accordingly. This process continues until the remaining travel time by the $MS$ is exhausted and subsequently process the next event. In other words,  if data transfer from a component $c_i$ is over before the sweep line reaches the end-point of its corresponding last sub-sink $LSS(c_i)$, then all the sub-sinks in $c_i$ are removed from $L$. This process continues until the sweep line reaches the next event point or the data delivery process is completed. The detailed algorithm for finding data availabilities of the sub-sinks is shown in Algorithm \ref{Algo:2}.

\begin{algorithm}
\caption{Plane sweep algorithm for computing data availabilities of the sub-sinks }
\label{Algo:2}

\KwData{Communication topology G, Data generated by the sensors $\{DG(s_1), DG(s_2), \ldots DG(s_n) \}$, $\mathbb{SS}$, $P$, $V$, $dtr$ }
\KwResult{Data availabilities of the sub-sinks  $DA=\{ DA(ss_1), DA(ss_2), \ldots DA(ss_m) \}$ }

Partition the sensor network into components $C=$ $\{c_1, c_2, \ldots c_k \}$ based on its communication topology \; 

$\forall {c_i \in C}$  : Compute total data generated $DG(c_i)$ by adding all sensors data in the component \; 

$\forall {c_i \in C}$ : $DA(c_i)= DG(c_i) $  \; 

$\forall {ss_i \in \mathbb{SS}}$ : $DA(ss_i)= 0 $ \; 

$\forall {ss_i \in \mathbb{SS}}$  : Find start-point ($p_i^s$), end-point ($p_i^e$) and component-id $C(ss_i)$ \;

 $\forall c_i \in C$: Find last sub-sink $LSS(c_i)$ \; 

% \tcc{Compute data availability $DA(ss_i)$ for all sub-sinks $ss_i \in \mathbb{SS}$  } 

\tcc{Initialize event queue $Q$ with start-point and end-point of the sub-sinks }

$Q= \emptyset$, $m= |\mathbb{SS}|$ \;

\For {$i =1$ to $m$} {
        
	InsertInQ($p_i^s$ );   InsertInQ($p_i^e$)\;
}

$L=\emptyset$ ; 	\tcc{Initialize status line $L$ }

%The $MS$ moves with maximum speed $V$ from $S$ until a component's end-point is visited or $MS$ reaches $E$ \;

\While{ $Q \ne \emptyset $ } {

	$CE=RemoveFromQ()$ \;    % \tcc{Remove current event from event queue}
	
	\uIf{ $CE$ = $p_i^s$ } {
	
		InsertInL($ss_i$ ) \;  %\tcc{Insert sub-sink $ss_i$ in appropriate position in $L$ based on its priority}
	    }
	\uElseIf{ $CE$= $p_i^e$  } {
		
             \uIf{ $ss_i \in c_j  \land  ss_i = LSS(c_j) \land DA(c_j) >0 $  }
		{
			$DA(ss_i) = DA(ss_i) + DA(c_j) $\;
			$DA(c_j) =0$ \;
		}
	
		RemoveFromL($ss_i$) \; % \tcc{Delete sub-sink $ss_i$ from $L$} 
	    }

	%\tcc{If a sub-sink $ss_j$ with maximum priority exists in $L$ then the $MS$ starts receiving data from $ss_j$}

	$NE=PeekFromQ()$ \;    %\tcc{Next event in $Q$ } 
	
	Let dist(CE,NE) = Distance between events $CE$ and $NE$ \;

	$TT= \frac{dist(CE, NE)}{V}$ ;  \tcc{Travel time between $CE$ and $NE$} 
		
	$ss_j= PeekFromL()$ ;  \tcc{ Peek maximum priority sub-sink in $L$ }  
	
	\While{$L \ne \emptyset$   $\land$  $DA( C(ss_j) )  \le  TT * dtr$ }{
			
		 $DA(ss_j)= DA(ss_j) + DA( C(ss_j) )$  \;
		
		 $DTT = \frac{DA( C(ss_j )  ) }{ dtr }$ ;  \tcc{Data transfer time} 
	       
		$TT= TT - DTT$ \;
		$DA( C(ss_j) )=0$  \;
		
		$RemoveFromQ(p_j^e) $ \;
	      $RemoveFromL(ss_j) $ \;
	
	      $ss_j=  PeekFromL() $ \;
	
	   } % End of while loop
	
           \uIf{ $L \ne \emptyset$  $\land   DA(C(ss_j)) > TT* dtr$ }{
          	
          		$DA(ss_j) = DA(ss_j) + TT* dtr $  \;
			$DA(C(ss_j)) =  DA(C(ss_j)) - TT* dtr$  \;
	    }
		
}  % End event queue while loop

\end{algorithm}

\begin{theorem} 
Time complexity of the plane sweep algorithm \ref{Algo:2} is $O(n+ e + m \log{m})$, where $n$ and $e$ denote the number of sensors and number of links in the communication graph $G$.  \\
\label{th:timecomplexity2}
\end{theorem}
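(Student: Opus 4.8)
The plan is to separate the running time into a graph-preprocessing phase (everything before the \textbf{while} loop) and a plane-sweep phase (the \textbf{while} loop itself), bound each phase independently, and then add the two bounds. The sweep phase is structurally identical to Algorithm~\ref{Algo:1}, so I would lean heavily on the analysis already carried out in Theorem~\ref{th:timecomplexity1}; the genuinely new contribution is the preprocessing on the communication graph $G(N,E)$, which is exactly where the additive $O(n+e)$ term comes from.

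For the preprocessing phase I would account for each line in turn. Partitioning the network into connected components $C=\{c_1,\ldots,c_k\}$ is a single graph traversal (BFS or DFS) of $G(N,E)$, costing $O(n+e)$; this traversal simultaneously yields the component label of every sub-sink, so $C(ss_i)$ is obtained for free. Summing the generated data to form each $DG(c_i)$ visits every sensor once, hence $O(n)$, and the initializations $DA(c_i)=DG(c_i)$ and $DA(ss_i)=0$ cost $O(k+m)=O(n+m)$. Computing the $m$ start-points, end-points, and component identities is $O(1)$ per sub-sink (a disk--path intersection), i.e.\ $O(m)$ total, and identifying each $LSS(c_i)$ is a single scan over the sub-sinks, again $O(m)$. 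Finally, building the event queue $Q$ by inserting the $2m$ events into a heap takes $O(m\log m)$. Summed, the preprocessing phase is $O(n+e+m\log m)$.

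For the sweep phase I would argue exactly as in Theorem~\ref{th:timecomplexity1}: the status line $L$ holds at most $m$ sub-sinks, so an insert or delete on $L$ costs $O(\log m)$ and a peek costs $O(1)$, while each of the $2m$ events is removed from $Q$ exactly once at $O(\log m)$. The main obstacle, and the one point deserving care, is that the inner \textbf{while} loop could naively appear to make the cost $O(m^2)$; this is dispelled by an aggregate (amortized) counting argument showing that, \emph{over the entire execution}, each sub-sink is inserted into $L$ once and removed once, and each end-point $p_j^e$ is deleted from $Q$ once. Consequently the total number of heap operations triggered by the inner loop is $O(m)$ at $O(\log m)$ each, and the total number of peeks is $O(m)$ at $O(1)$ each, so the whole sweep phase is $O(m\log m)$. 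Adding the two phase bounds yields $O(n+e+m\log m)$, the $n+e$ term being contributed solely by the component decomposition and data-aggregation steps absent from Algorithm~\ref{Algo:1}.
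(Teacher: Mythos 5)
Your proposal is correct and follows essentially the same route as the paper: an $O(n+e)$ graph-preprocessing bound (DFS for components, per-component data sums, start/end points, last sub-sinks) added to an $O(m\log m)$ sweep bound obtained by the same amortized count of one insertion and one removal per sub-sink as in Theorem~\ref{th:timecomplexity1}. The only differences are cosmetic (e.g., you charge the last-sub-sink identification at $O(m)$ where the paper says $O(n)$, and you make the amortization over the inner loop more explicit), so no further commentary is needed.
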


\begin{proof}
Depth first search is used for partitioning the network into components which can be performed in $O(n+e)$ time. Computing total data generated for each component can be performed in $O(n)$ time. Finding the start-points and end-points of the sub-sinks can be done in $O(m)$ time. Identifying the last sub-sink for each component can be done in $O(n)$ time. The time complexity analysis for the rest of the algorithm is similar to Algorithm \ref{Algo:1}. The plane sweep algorithm processes $2m$ event points. The events are inserted and then removed from the event queue, which takes overall $O(m \log{m})$ time. During the processing of an event, some basic operations on the status line data structure are performed. There are at most $m$ sub-sinks intersecting the sweep line on $P$ at any time and therefore, the time needed to perform an insert or delete operation on status line is $O(\log{m})$ and peek operation can be performed in $O(1)$ time. Through the algorithm, a sub-sink is inserted once and removed once from the status line data structure. Therefore, the total time spent on accessing the sweep line status data structure is $O(m\log{m})$.  Hence, it follows that the total time spent processing all the events is $O(m \log{m})$. Therefore, the total time complexity of the algorithm is  $O( n + e + m\log{m} )$ 
\end{proof} 

\subsection{Distributing Data Among the Sub-Sinks Using Network Flow Algorithm } 
\label{subsec:data_distribution}

Once the data availability values of the sub-sinks $DA(ss_1), DA(ss_2)$ $\ldots DA(ss_m)$ are determined using Algorithm \ref{Algo:2}, this phase distributes the sensors' data among the sub-sinks. Data are pushed from the sensors to the sub-sinks based on their calculated data availability values. Sensors use the communication topology network to send data to the sub-sinks. Network flow algorithm is used for finding data flow from the sensors to the sub-sinks. Construction of network flow graph and determining the distribution of data from the sensors to the sub-sinks for a given communication topology is described with an example for the sensor network in Figure \ref{fig:ex1}.

\begin{figure}[!t]
\centering 
\includegraphics[width=8cm]{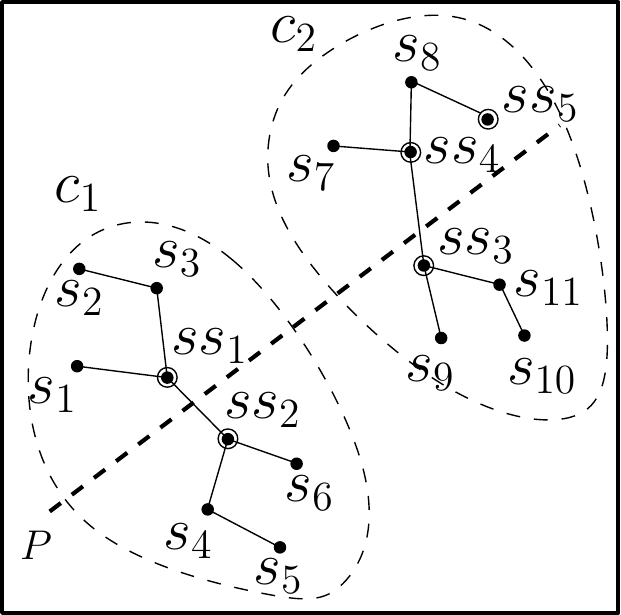}
\caption{Connected components corresponding to the sensors network}
\label {fig:ConnectedComponents}
\end{figure}

The connected components corresponding to the sensor network of Figure \ref{fig:ex1} are identified and labelled with $c_1$, and $c_2$ in Figure \ref{fig:ConnectedComponents}. To determine the data distribution from the sensors to the sub-sinks, a network flow graph is constructed using the communication topology of the sensor network. Thereafter, the sensors' data are distributed among the sub-sinks based on the data availability values of the sub-sinks, the amount of data generated within the sensors, and the communication topology. The network flow graph corresponding to the communication topology in Figure \ref{fig:ConnectedComponents} is shown in Figure \ref{fig:FlowGraph}. A virtual source vertex $VS$  and a virtual sink vertex $VK$ are added to the network topology. To maintain the cleanness of the figure, we have drawn four duplicate virtual source vertices, but actually they are a single vertex $VS$.  The virtual source vertex is incident to the sensor nodes including the sub-sinks using virtual links. The capacities of these virtual links are set based on their data generation capacities. Therefore, the capacity of a link between $VS$ and $s_i$ is $DG(s_i)$. Similarly, the link capacity between $VS$ and $ss_i$ is $DG(ss_i)$ because, as these are data generation limits of the sensor $s_i$ /sub-sink $ss_i$. The sub-sinks are incident to the virtual sink $VK$ through virtual links. The capacity of a virtual link between a sub-sink $ss_i$ and the virtual sink $VK$ is set to $DA(ss_i)$, which is its data availability value determined in the previous phase. Other links represent the communication links among the sensors/sub-sinks, and their capacities are set to infinity because we assume that a sensor can forward the data generated within itself or received from its neighbors. 

Thereafter, the network flow algorithm is used for finding the maximum data flow from the $VS$ to $VK$. The flow value of the links denotes the data flow between the corresponding sensors/sub-sinks. Finally, data is delivered from a sub-sink to virtual sink $VK$. The flow value between a sub-sink and the virtual sink denotes the actual data delivery by the sub-sink to the $MS$. 

\begin{figure}[!t]
\centering 
\includegraphics[width=8cm]{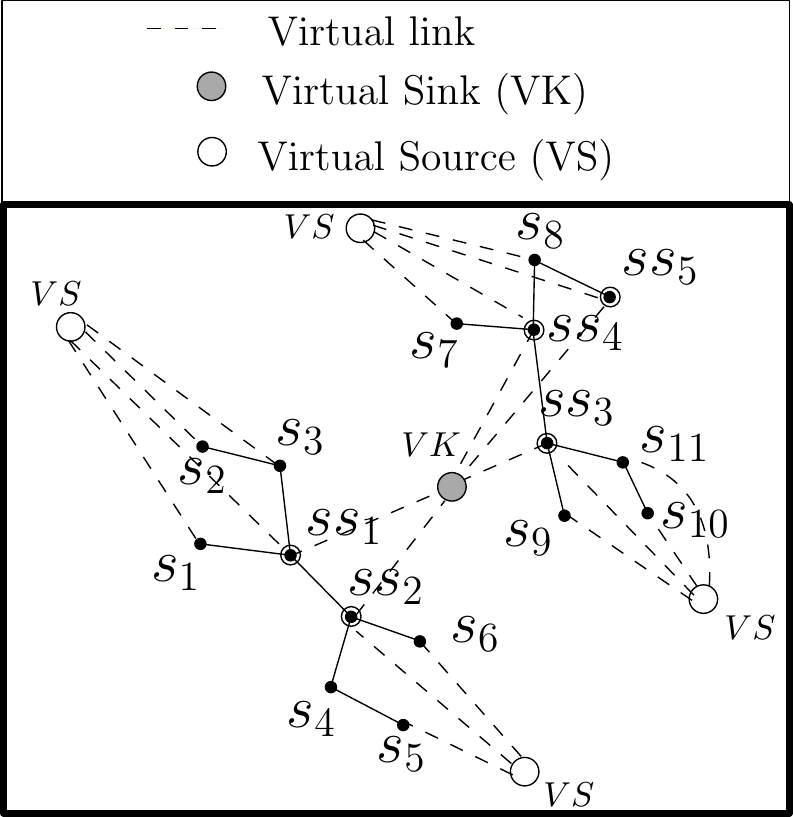}
\caption{Network flow graph corresponding to the sensors network; Link capacity : $(VS, s_i)= DG(s_i)$; $(VS, ss_i)= DG(ss_i)$;  $(ss_i, VK)= DA(ss_i)$;  other links capacities are $\infty$ }
\label {fig:FlowGraph}
\end{figure}

\subsection{Gathering Data Using Algorithm \ref{Algo:1} }
\label{subsec:speed_schedule}

In this subsection, we find an optimal data gathering schedule of the mobile sink ($MS$) to collect complete data from all the sub-sinks in minimum time. Once the data availabilities of the sub-sinks  $DA(ss_1), DA(ss_2) \ldots DA(ss_m)$ are determined, and data are pushed from the sensors to the sub-sinks, we use the Algorithm \ref{Algo:1} of Section \ref{sec:known_data_availability} to find the  {\em data gathering schedule} of the $MS$. 

%As the $MS$ moves through the path $P$, the highest priority sub-sink $ss_j$ within the communication region is identified and the $MS$ receives data from the sub-sink $ss_j$ until the next event  point is visited by the $MS$ or the complete data from the sub-sink is delivered. The $MS$ waits only at end-points if the corresponding sub-sink does not finish delivering its complete data. 

% If the data availability of the sub-sinks are determined using LPP based solution then the speed of the $MS$ for each segment $I_i$ is determined as follows. The LPP solution returns the time allocation of a sub-sink for each segment. For each segment $I_i$ calculate total time $T(I_i)$ allotted to the sub-sinks in $SS(I_i)$, which denotes the set of sub-sinks reachable withing the segment $I_i$. The lengths of the segments are already known. Let $|I_i|$ denote length of segment $I_i$. There after the speed of the $MS$ on segment $I_i$ is calculated as $\frac{|I_i|}{T(I_i)}$. The $MS$ moves with the speed and collects data from the sub-sinks. While the $MS$ travelling through $P$, allocates $t_i^i$ time to sub-sink $ss_j$ in segment $I_i$ for data collection from $ss_j$. 

%%%%%%%%%%%%%%%%%%%%%%%%%%%%%%%%%%%%%%%%%

\begin{theorem}
If the data availabilities of the sub-sinks are determined using Algorithm \ref{Algo:2} and $MS$ follows the speed-schedule using Algorithm \ref{Algo:1}, then the mobile sink $MS$ completes the data gathering process in minimum time. 
\label{th:th4}
\end{theorem}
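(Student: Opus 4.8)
The plan is to reduce the optimality of the combined procedure to a single statement about the \emph{distribution} chosen by Algorithm \ref{Algo:2}, and then to invoke Theorem \ref{th:th2} for the scheduling part. First I would observe that, once the data availabilities $DA(ss_1),\dots,DA(ss_m)$ are fixed, Theorem \ref{th:th2} guarantees that Algorithm \ref{Algo:1} gathers the data in minimum time, so the total gathering time is $\frac{|P|}{V}+W(DA)$, where $W(DA)$ is the unavoidable total waiting time at end-points induced by the distribution $DA$. Since sensors in different connected components cannot communicate, the total data $DG(c_j)$ of each component is fixed and may be split only among the sub-sinks of that component; hence the only freedom left in Problem 2 is this intra-component split. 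The theorem therefore follows once I show that the split produced by Algorithm \ref{Algo:2} minimizes $W(DA)$ over all flow-feasible distributions.

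Next I would establish the governing constraint. By Corollary \ref{cor:cor1}, a sub-sink $ss_i$ can deliver data only while the $MS$ is inside $[p_i^s,p_i^e]$; consequently every unit of $DG(c_j)$ must reach the $MS$ no later than the instant it leaves $d_j$, the end-point of $LSS(c_j)$. The amount of a component's data that can be absorbed \emph{without waiting} before its deadline $d_j$ equals the total reception time available inside that component's communication windows up to $d_j$ (at rate $dtr$), net of the time those windows are shared with other components. Any residual data must be collected by waiting at $d_j$, which is exactly the ``dump the leftover on $LSS(c_j)$'' step in Algorithm \ref{Algo:2}. Thus minimizing $W(DA)$ is equivalent to \emph{maximizing}, for each component, the volume absorbed during free (speed-$V$) travel.

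I would then prove that Algorithm \ref{Algo:2} attains this maximum by a greedy/exchange argument. Within a component the algorithm fills sub-sinks in order of earliest start-point, i.e.\ it books each unit of reception time at the earliest position on $P$ at which it is available; across components it assigns priority by earliest deadline $d_j$ (earliest-deadline-first). I would argue that any flow-feasible distribution/schedule that absorbed strictly more of some component for free can be transformed toward the one produced by Algorithm \ref{Algo:2} without decreasing the free volume of any component: a standard swap shows that whenever a later-deadline component is served while an earlier-deadline one is simultaneously available and still unsatisfied, exchanging the two allocations never hurts, and likewise that pushing each component's reception as early as possible never wastes a window that an alternative split could have used. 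Consequently no feasible distribution beats Algorithm \ref{Algo:2} at any deadline, so $W(DA^\ast)\le W(DA)$ for every feasible $DA$, where $DA^\ast$ is the output of Algorithm \ref{Algo:2}.

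Combining the two parts finishes the proof: Algorithm \ref{Algo:2} yields the distribution $DA^\ast$ minimizing the forced waiting, and Algorithm \ref{Algo:1} then gathers $DA^\ast$ in minimum time by Theorem \ref{th:th2}, so the $MS$ completes the whole process in minimum time. I expect the main obstacle to be making the exchange argument rigorous in the presence of \emph{time-sharing at overlapping windows}: because several sub-sinks (possibly from different components) may intersect the sweep line at once while only one transmits at a time, each swap must simultaneously respect the single-receiver constraint and all per-component deadlines, and care is needed to show that the earliest-deadline-first priority of Algorithm \ref{Algo:2} is genuinely globally optimal rather than merely locally greedy.
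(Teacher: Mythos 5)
Your route is genuinely different from the paper's. You first make explicit a decomposition the paper never states: for any fixed distribution $DA$, Theorem \ref{th:th2} already gives the optimal schedule, so the combined procedure is optimal iff Algorithm \ref{Algo:2}'s distribution minimizes the forced waiting $W(DA)$ over all flow-feasible distributions (and, since components cannot exchange data, only the intra-component split is free). You then propose to prove that second claim by an earliest-deadline-first exchange argument. The paper instead runs a single direct proof by contradiction about the \emph{behaviour} of the combined algorithms: it assumes the $MS$ waits at some component's last end-point $c_i^e$ while a sub-path inside $[c_i^s,c_i^e]$ was spent idle or serving a lower-priority sub-sink, and then kills this by a case analysis on how two components' spans can overlap (disjoint, staggered, nested). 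Your decomposition is cleaner and isolates exactly what must be proved about Algorithm \ref{Algo:2}; the paper's version buys concreteness by tying the argument to the geometry of the component intervals, but it conflates schedule-optimality and distribution-optimality in one contradiction.

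That said, the step you flag as the ``main obstacle'' is a genuine one, and neither your sketch nor the paper actually closes it. Your exchange argument treats each component as a job with release time $c_i^s$ and deadline $c_i^e$, but a component's true availability is the \emph{union} of its sub-sinks' windows, which need not be a contiguous interval: a connected component can have two sub-sinks whose intervals on $P$ are separated by a gap in which only another component is reachable. Preemptive EDF exchange arguments are sound for jobs available throughout $[\text{release},\text{deadline}]$, but can fail when availability sets have holes (serving the earlier-deadline component first may squander the only window in which the later-deadline component could have been served for free). This is precisely the configuration not covered by the paper's three overlap pictures either, since they implicitly treat each component's span as an interval. So if you pursue your plan, the swap lemma must be stated for union-of-intervals availability (e.g.\ via a Hall-type feasibility condition or a flow formulation over the segments $I_j$), not for plain intervals; as written, the claim ``exchanging the two allocations never hurts'' does not yet follow.
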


\begin{proof}
The data availabilities of the sub-sinks are determined using Algorithm \ref{Algo:2} such that the $MS$ is able to receive complete data from a component while moving with its maximum speed $V$ and if required waits only at the last sub-sink's end-point. The data generated in the sensors are distributed among its sub-sinks based on the data availability values determined using Algorithm \ref{Algo:2}. According to Algorithm \ref{Algo:1}, while the $MS$ is moving, it receives data from the highest priority sub-sink having data to deliver. Let the first sub-sink of a component $c_i$ be a sub-sink $ss_j \in c_i$, whose start-point $p_j^s$ appears first on $P$. Let $c_i^s =p_j^s$ denote the start-point of the first sub-sink of $c_i$. Similarly, $c_i^e$  denotes the end-point of the sub-sink $LSS(c_i)$.
%When the $MS$ encounters an end-point of a sub-sink and it is a last sub-sink of its corresponding component and complete data from the component is not delivered, then only the $MS$ waits at the end-point. Otherwise, it moves with its maximum speed.

%The communicating sub-paths $[p_i^s, p_i^e ]$ and $[p_j^s, p_j^e ]$ of two sub-sinks $ss_i$ and $ss_j$ may overlap with each other. They may belong to same component or in different components. 

Algorithm \ref{Algo:2} prioritizes the sub-sinks based on their components' last sub-sink's end-point positions and sub-sinks' start-point positions. The sub-sink whose component's last sub-sink's end-point appears first on $P$, gets the highest preference for data delivery. If two sub-sinks are on the same component, then the sub-sink whose start-point appears first has a higher priority than the other. The $MS$ receives data from a sub-sink in $L$, which has maximum priority and has data to deliver. If there is no data, then it is immediately removed from $L$.

%Assume for the sake of contradiction that the $MS$ does not complete the data gathering process in minimum time. So, according to Algorithm \ref{Algo:2}, there is an end-point $c_i^e$ where the $MS$ waits for some extra time to receive data from component $c_i$. 

Similar to the  proof of Theorem \ref{th:th2}, assume for the sake of contradiction that the $MS$ does not complete the data gathering process in minimum time. It implies that there is a sub-path of $[c_i^s, c_i^e]$  for component $c_i$, and the sub-path is under the communication disk of a sub-sink $ss_k \in c_i$, where the $MS$ moves without receiving data from any sub-sink or receives data from a sub-sink $ss_l$, whose priority $PR(ss_l) < PR(ss_k)$ and the $MS$ waits at $c_i^e$ for receiving data from component $c_i$. The sub-sink  $ss_l$  may belong to (i) same component $c_i$ as of $ss_k$, or (ii) in a different component $c_j$, i.e.  $c_j \ne c_i$.   

In case (i), where sub-sink  $ss_l \in c_i$, the $MS$ does not wait at $c_i^e$. This is because within the communication disk of $ss_k$, if the $MS$ receives data from sub-sink  $ss_l \in c_i$ with $PR(ss_l) < PR(ss_k)$, then all the sub-sinks in $c_i$ with priority $\ge PR(ss_k)$ do not have data to deliver. 

In case (ii), where sub-sink $ss_l \in c_j$ and $ c_j \ne c_i$, based on the first sub-sink's start-point and last sub-sink's end-point positions of a component, two components $c_i$ and $c_j$ have three different types of overlaps as shown in Figure \ref{fig:Proof_Th4}. All other types of overlaps are equivalent to one of them. We will show that extra delay at $c_i^e$ does not hold for any of these three types of overlaps.

\begin{figure}[!t]
\centering 
\includegraphics[width=8cm]{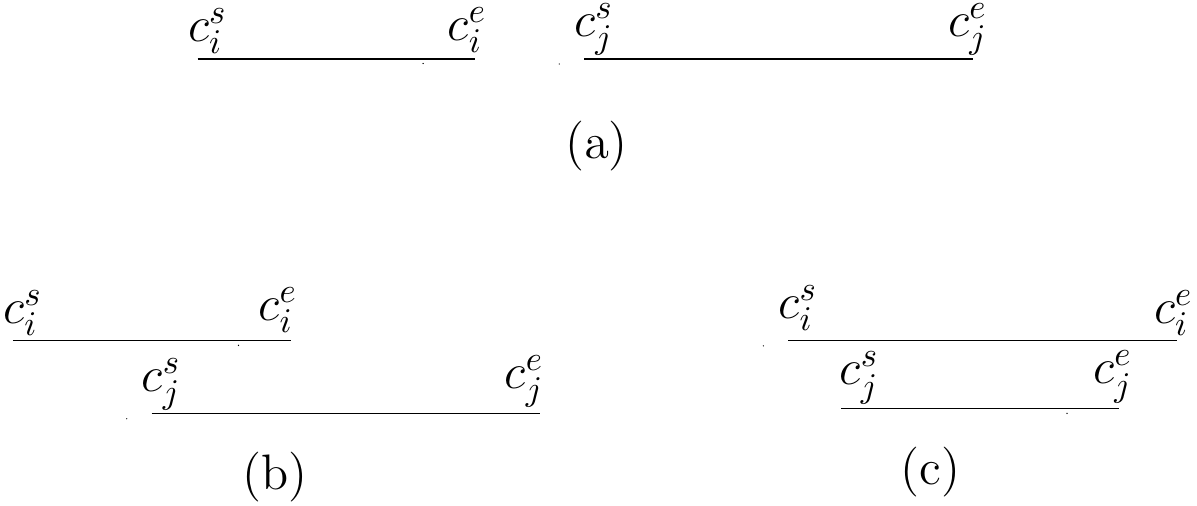}
\caption{Overlaps between components}
\label {fig:Proof_Th4}
\end{figure}

In Figure \ref{fig:Proof_Th4}(a) type overlap, two components are disjoint. Hence, the $MS$ does not receive data from $ss_l \in c_j$ within [$c_i^s, c_i^e$ ] and make an extra delay at $c_i^e$.
  
% , where the $MS$ starts receiving data at positions $c_i^s$ from component $c_i$. The $MS$ moves with its maximum speed $V$, while moving the $MS$ collects data from only the sub-sinks in $c_i$.

In Figure \ref{fig:Proof_Th4}(b) type overlap, the second component starts before the end of first component. In this case, the priority of any sub-sink in component $c_i$ is higher than any sub-sink in component $c_j$. Hence, the $MS$ receives data from $ss_l \in c_j$ within [$c_i^s, c_i^e$ ] only when there is no data in $ss_k$. If there is no data in $ss_k$, then all data from $c_i$ is already delivered to the $MS$ and the $MS$ does not wait at $c_i^e$.

In Figure \ref{fig:Proof_Th4}(c) type overlap, the priority of any sub-sink in component $c_j$ is higher than any sub-sink in component $c_i$. So, $PR(ss_l)$ can not be less than $PR(ss_k)$.

% In this case, the $MS$ begins receiving data from a sub-sink in component $c_i$ from $c_i^s$ and starts receiving data from a sub-sink in $c_j$  once it crosses $c_j^s$ and continue receiving until data delivery from  $c_j$ is over, or the $MS$ crosses $c_j^e$.

Therefore, in both case (i) and case (ii) our assumption does not hold and hence the theorem is proved.
\end{proof}

\FloatBarrier

%%%%%%%%%%%%%%%%%%%%%%%%%%%%%%%%%%%%%%%%%

\section{Experiment and Performance Analysis} \label{sec:experiment}

We evaluate the performance of our two proposed algorithms. We have used MATLAB for implementing our algorithms. In this section, we evaluate the performances of our proposed algorithms. We refer the algorithms for data gathering algorithm using speed controllable mobile-sink with known data availability (Algorithm \ref{Algo:1}) as {\bf VS-K-DA}. {\bf VS-UK-DA} refers to the case where data availabilities of the sub-sinks are unknown and optimized using Algorithm \ref{Algo:2}. Algorithm {\bf VS-UK-DA} is combined with network flow algorithm for data distribution and with Algorithm \ref{Algo:1} to find data gathering schedule. We compare the above two algorithms with a third algorithm {\bf FS-K-DA}, where data availabilities of the sub-sinks are known apriori as in {\bf VS-K-DA} and the mobile sink $MS$ is moving with its maximum speed as in \cite{Gao:2011} for collecting data from the sub-sinks.

\subsection{Simulation environment}

During simulation, the number of sensor nodes is varying for 100, 120, 140 and 160. The communication range of sensors is set to 75m. Sensors deployment region is a rectangular area of size 1000m x 400m. The rectangular region is vertically partitioned into four sub-regions of length 250m each. Within each sub-region of length 250m, sensors are randomly deployed within a vertical strip of [75m : 150m]. This is done to ensure that the random communication topology forms at least four connected components and there are gaps between the consecutive components. In the simulation, the $MS$ is moving along a horizontal path $P$ at the centre of the region (y=200m). The maximum speed $V$ of the $MS$ is set to 2 m/s. The $MS$ collects data from one sub-sink at a time, which is within the communication range. The data transfer rate between a sub-sink and the $MS$ is set to 2 Kbps. We assume that the sensors generate data randomly between 0 to 10 packets, and each packet is of size 1Kb. The far-away sensors send their sensed data to the sub-sinks through multi-hop forwarding. Data availabilities of the sub-sinks (for known apriori case) of problem 1 is determined using shortest path routing, where the sensors forward their data to its closest (hop-count) sub-sink.

Let $e_r$ and $e_t$ denote energy consumption for receiving and transmitting unit bit data. Let  $E_i$ represent the total energy consumption of a sensor $s_i$ for receiving $d_{r}^i$ bits, and transmitting $d_{t}^i $ bits. Therefore, $E_i$ can be written as : 

 \begin{equation}\label{eq:1}
  E_i = ( e_r * d_{r}^i + e_t *d_{t}^i  )
 \end{equation}

Total energy consumption of the network $E_{total}$ is calculated as the summation of energy consumption for forwarding data from the sensors to the $MS$ through their respective sub-sinks.

\begin{equation}\label{eq:2}
 E_{total} = \sum_{i=1}^{n} {E_i} 
\end{equation}

Let $DD(ss_i)$ denote data delivered by a sub-sink $ss_i$ to the $MS$. Hence, total energy consumption $E_{total}$ includes energy consumption for delivering data from the sub-sinks to the $MS$, which is $\sum_{i=1}^{m} {e_t *DD(ss_i) }$. This is because the sub-sinks $\mathbb{SS} \subseteq N$. Table \ref{tab:1} summarizes the simulation parameters.

% {It minimizes the total energy consumption of the sensor network for data forwarding.  In other words, the sensed data of the far-away sensors are sent to sub-sinks through the shortest path routing trees, rooted with the sub-sinks. }

\begin{table}[h]
\caption{Simulation Parameters} 
\label{tab:1}

\begin{tabular}{lll}

 \hline
{\bf Parameter}  & {\bf Value} \\

 \hline
Rectangular deployment area  &  1000m  $\times$  400m \\
No. of sensors  & 100, 120, 140, 160  \\
Maximum speed of $MS$ & 2 m/s \\
Communication range of sensor  & 75 m \\
Data transmission rate & 2 Kbps \\
$e_r$  & 2  $\mu$ Joule/bit  \\
$e_t$  & 3  $\mu$ Joule/bit  \\

 \hline
\end{tabular}
\end{table}

\subsection{Performance analysis of the proposed algorithms}

Figure \ref{fig:result1} shows the total data collected by the mobile-sink with respect to the number of sensors. From the figure, it is obvious that the amount of data collection is proportional to the number of sensors. Data collection in {\bf VS-K-DA} and  {\bf VS-UK-DA} are same because both the algorithms collect complete data from the network, whereas data collection in {\bf FS-K-DA} is lesser than the two proposed algorithms. The difference between fixed speed and variable speed data gathering increases as the number of sensors increases. This is because, in {\bf VS-K-DA} and  {\bf VS-UK-DA}, the complete data from the sub-sinks are collected by controlling the speed of the $MS$, whereas in {\bf FS-K-DA}, the $MS$ moves with its maximum speed, and the sub-sinks do not get enough time to deliver their data completely.

\begin{figure}[!h]
\centering 
\includegraphics[ width=8cm]{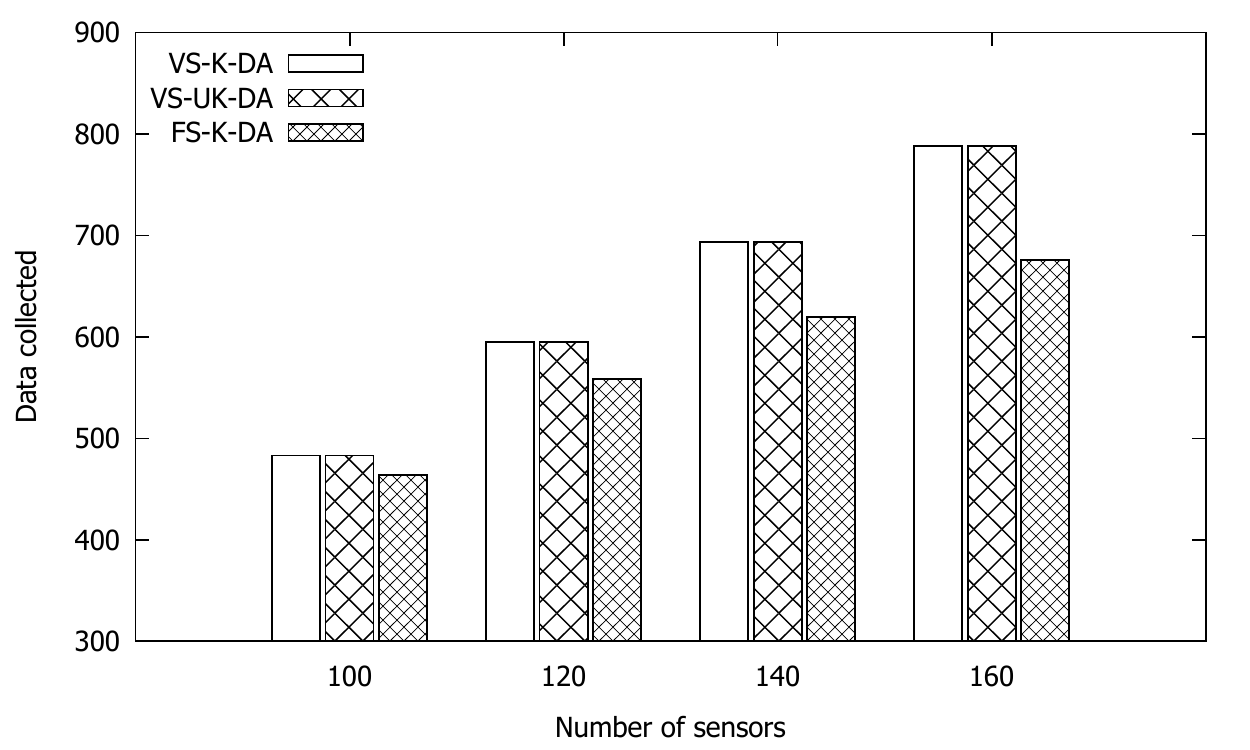}
\caption{Data collected (Kb) with respect to no of sensors}
\label{fig:result1}
\end{figure}

%%%%%%%%%%%%%%%%%%%%%%%%%%%%%%%%%%%%%%%%%

Figure \ref{fig:result2} shows the data gathering time with respect to the number of sensors. It shows that the data gathering time increases proportionally to the number of sensors for the two proposed algorithms {\bf VS-K-DA} and {\bf VS-UK-DA}. But data gathering time of {\bf FS-K-DA} is constant and it does not depend on the number of sensors. This is because in {\bf FS-K-DA} the $MS$ moves with its fixed maximum speed (2m/sec). Data gathering time in {\bf VS-UK-DA} is lesser than {\bf VS-K-DA}. The time difference between {\bf VS-K-DA} and  {\bf VS-UK-DA} increases proportional to the number of sensors present in the network. Because in {\bf VS-UK-DA}, sensors' data are forwarded to the sub-sinks to reduce the total data gathering time. In algorithm {\bf VS-UK-DA}, sometimes data are forwarded to the sub-sinks at a longer hop count distance. It increases the energy consumption of the network, which is reflected in Figure \ref{fig:result7}. 

\begin{figure}[!h]
\centering 
\includegraphics[ width=8cm]{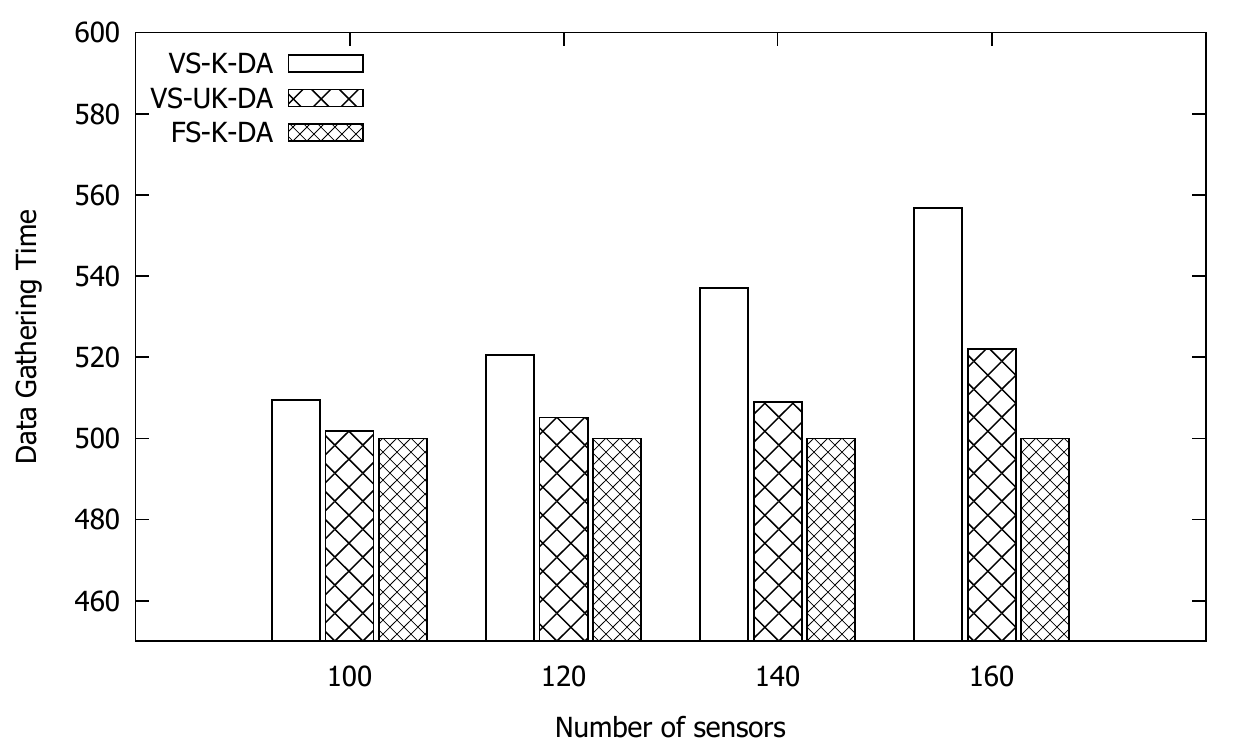}
\caption{Data gathering time (Sec) with respect to no of sensors }
\label{fig:result2}
\end{figure}

%%%%%%%%%%%%%%%%%%%%%%%%%%%%%%%%%%%%%%%%%

Figure \ref{fig:result3} shows the average speed of the $MS$ with respect to the number of sensors. For our two proposed algorithms, the average speed of the $MS$ decreases as the number of sensors increases. This is because as the number of sensors increase, more data are forwarded to the sub-sinks, and it increases the data transmission time from the individual sub-sink to the $MS$. The average speed of {\bf VS-UK-DA} is little higher than {\bf VS-K-DA}.

\begin{figure}[!h]
\centering 
\includegraphics[width=8cm]{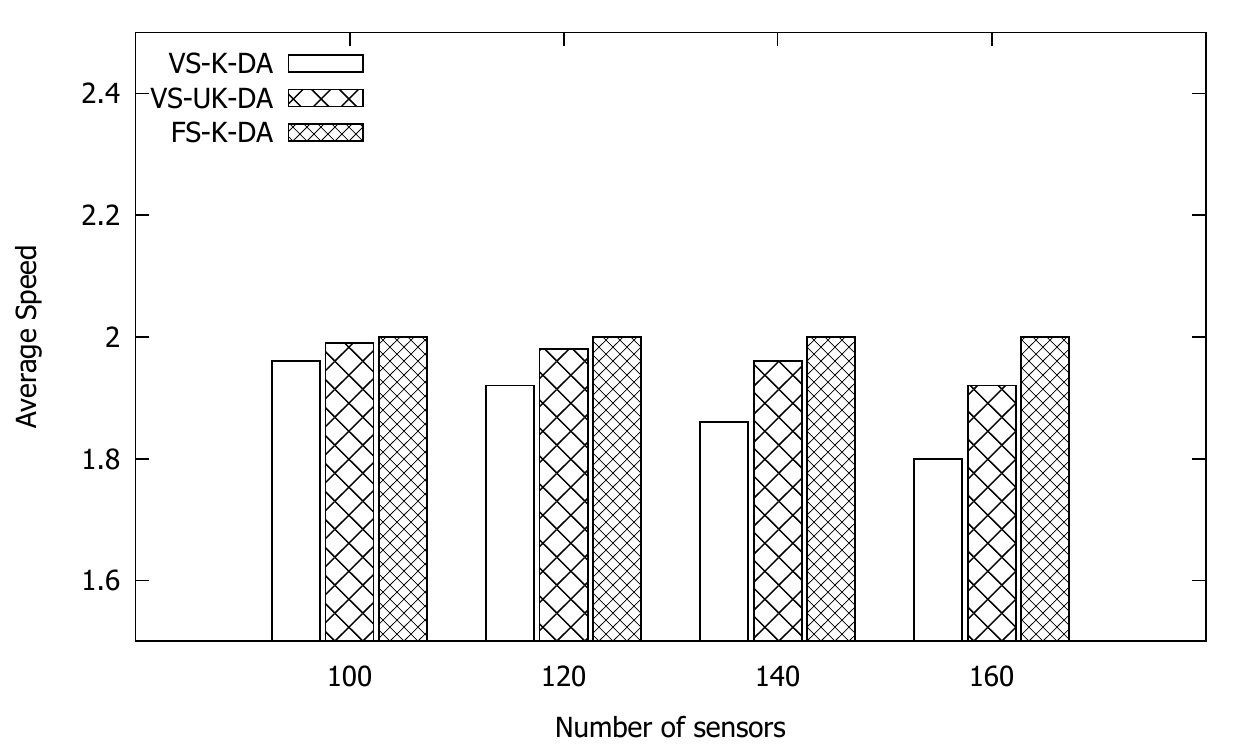}
\caption{Average speed (m/Sec) of the mobile sink }
\label{fig:result3}
\end{figure}

%%%%%%%%%%%%%%%%%%%%%%%%%%%%%%%%%%%%%%%%%
\begin{figure}[!h]
\centering 
\includegraphics[ width=8cm]{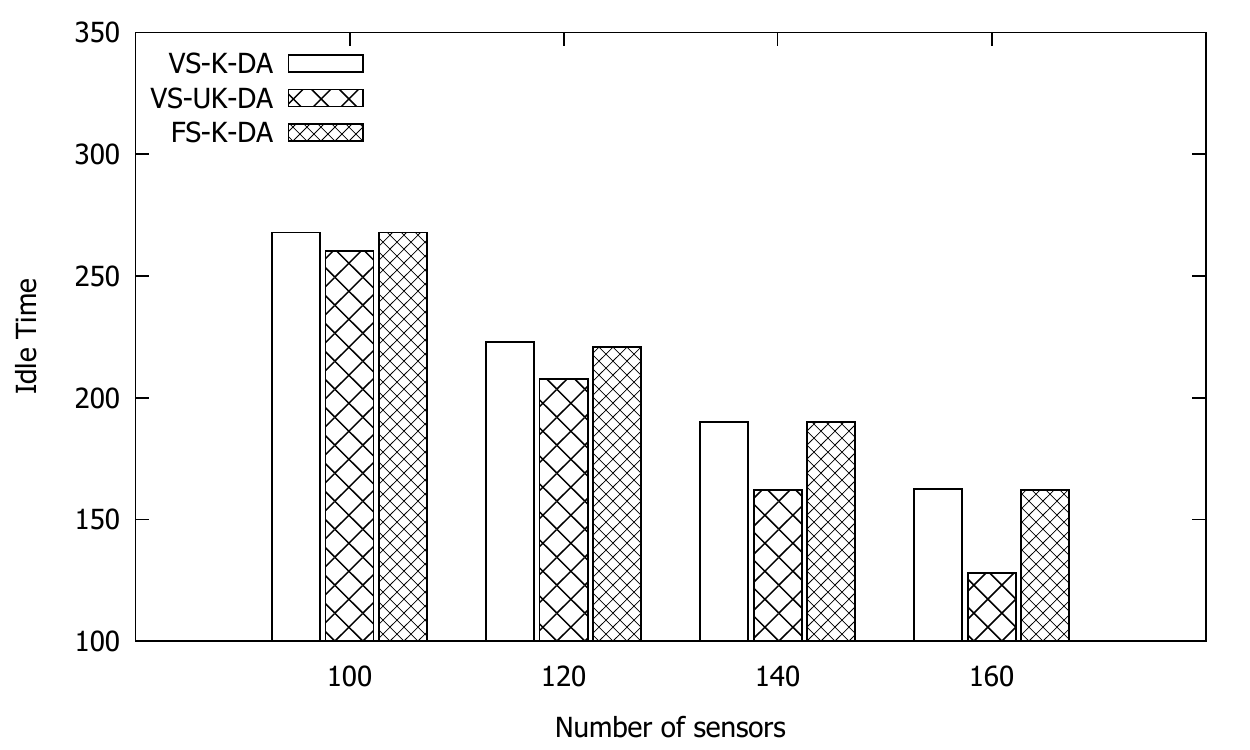}
\caption{Idle time (Sec) of the mobile sink }
\label{fig:result4}
\end{figure}

Idle period of the $MS$ denotes the time the $MS$ moves without receiving data from any sub-sink while moving on the path $P$. Figure \ref{fig:result4} shows the idle period of the $MS$. The idle period decreases as the number of sensors increases. This is because as the number of sensors increases, more sub-sinks are there and hence, the total data transfer time increases and the idle time decreases. Idle period of {\bf VS-UK-DA} is comparatively lower than the other, and the difference increases as the number of sensors increases.

%%%%%%%%%%%%%%%%%%%%%%%%%%%%%%%%%%%%%%%%

Throughput measures the amount of data collected by the $MS$ per unit time. Figure \ref{fig:result5} shows the throughput of the network with respect to the number of sensors. As the number of sensors increases, the number of sub-sinks and the total data collection by the $MS$ are also increased and hence, improves the throughput of the network. From the result, it is observed that the throughput of {\bf VS-UK-DA} is comparatively higher than the other.

\begin{figure}[!h]
\centering 
\includegraphics[ width=8cm]{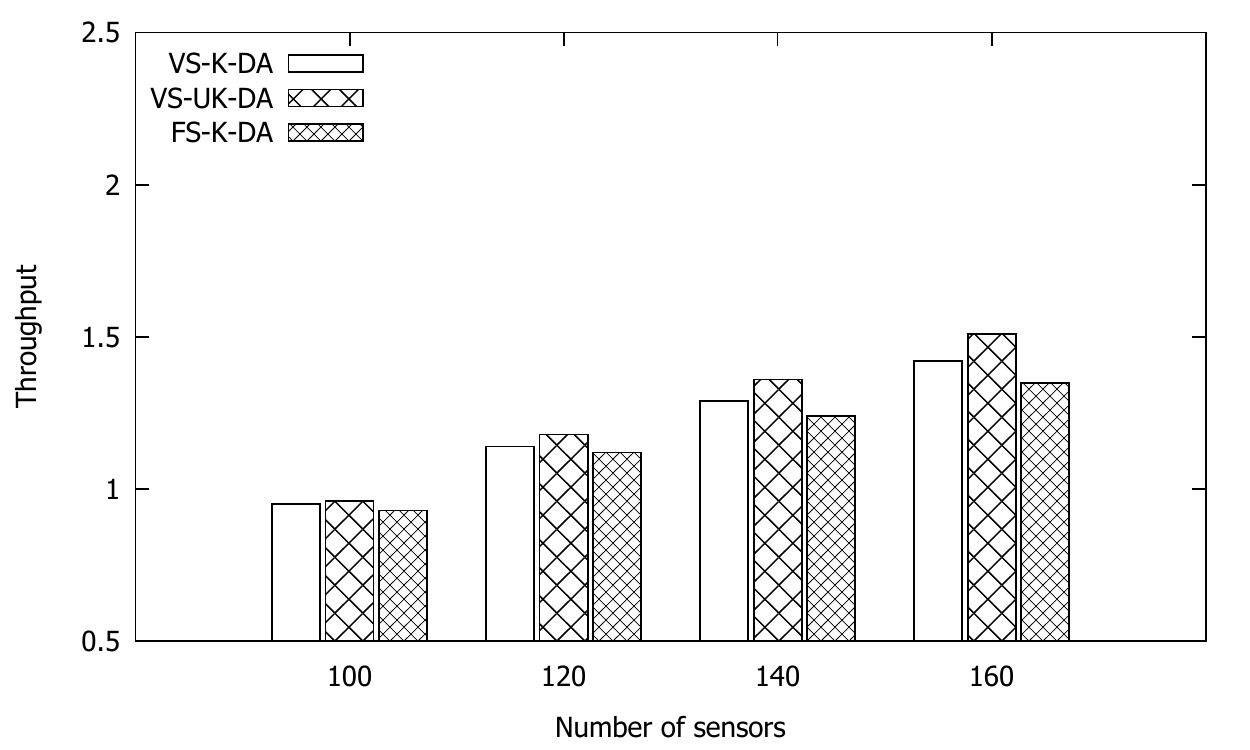}
\caption{Throughput with respect to no of sensors}
\label{fig:result5}
\end{figure}

%%%%%%%%%%%%%%%%%%%%%%%%%%%%%%%%%%%%%%%%%

\begin{figure}[!h]
\centering 
\includegraphics[ width=8cm]{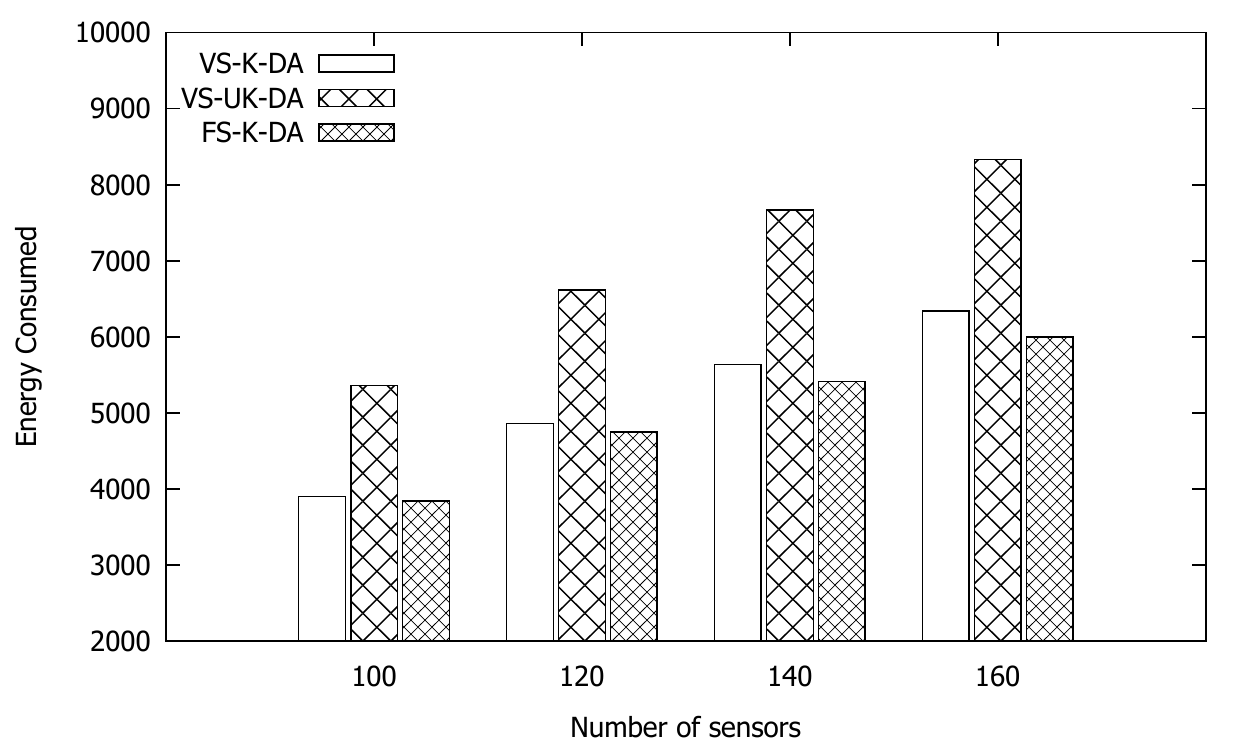}
\caption{Total Energy Consumption (m Joule) with respect to no of sensors}
\label{fig:result7}
\end{figure}

We evaluate the total energy consumption for forwarding data to the $MS$, but the energy consumption of the $MS$ is not considered. Figure \ref{fig:result7} shows the total energy consumption with respect to the number of sensors. As the number of sensors increases total data generated in the network increases proportionally and hence, total energy consumption increases proportionally. In both {\bf VS-K-DA} and  {\bf FS-K-DA} data gathering algorithms, the data generated in the sensors are transferred to the sub-sinks through the shortest path. But, in algorithm {\bf FS-K-DA}, complete data from the sub-sinks are not delivered to the $MS$ and hence, energy consumption is little lesser than {\bf VS-K-DA}. Whereas both {\bf VS-K-DA} and {\bf VS-UK-DA} algorithms deliver complete data to the $MS$, but algorithm  {\bf VS-UK-DA} forwards data to the sub-sinks to optimize total gathering time. Hence, sometimes sensors' data are forwarded to sub-sinks which are at a longer distance, which increases the total energy consumption of {\bf VS-UK-DA}.  
%%%%%%%%%%%%%%%%%%%%%%%%%%%%%%%%%%%%%%%%%

Finally, we study the performance of the algorithms, by varying the maximum speed limit $V$ of the $MS$ and evaluate the total data gathering time of the $MS$. Figure \ref{fig:result6} shows the total data gathering time for different speeds. As the maximum speed limit increases, the data gathering time also decreases. This is because as the speed increases, the idle period of the $MS$ decreases proportionally. Also, the time difference between {\bf VS-K-DA} and {\bf VS-UK-DA} decreases proportionally. For fixed speed data gathering {\bf FS-K-DA}, total data gathering time decreases linearly, which is reflected in the figure. 

\begin{figure}[!h]
\centering 
\includegraphics[ width=8cm]{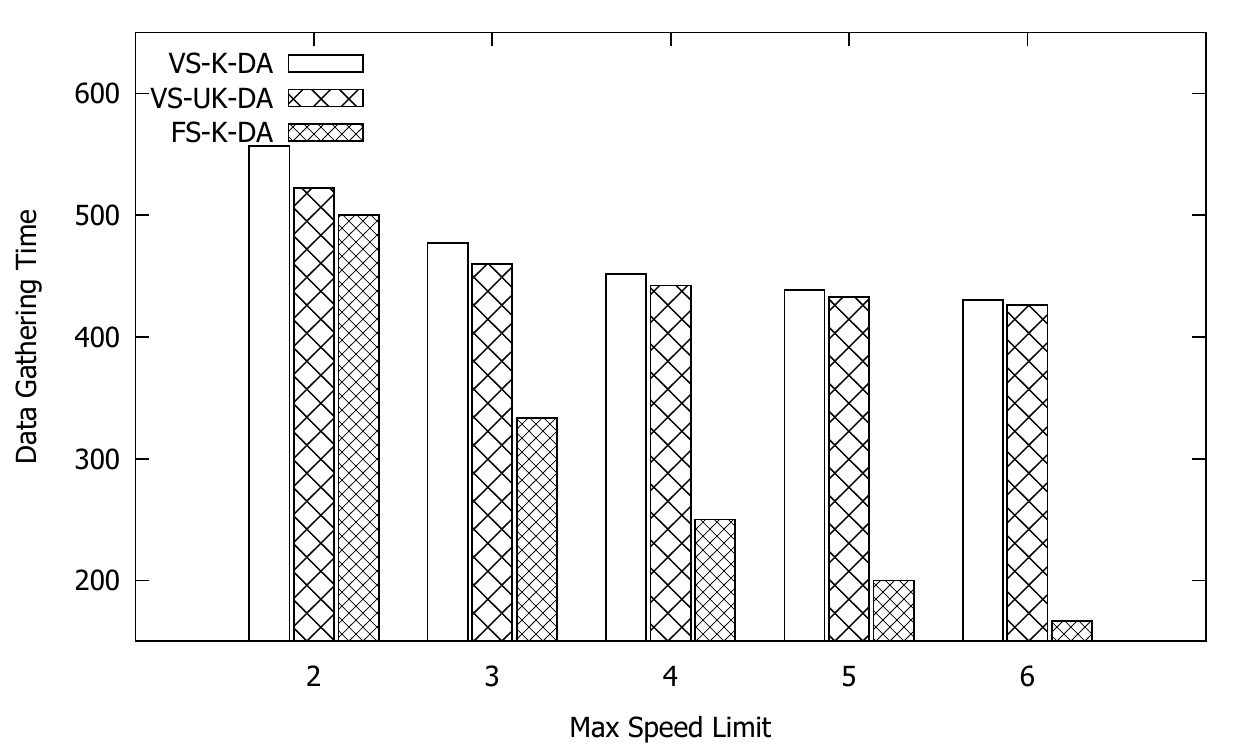}
\caption{ Data gathered time (Sec) with respect to maximum speed of the $MS$}
\label{fig:result6}
\end{figure}

%%%%%%%%%%%%%%%%%%%%%%%%%%%%%%%%%%%%%%
%%%%%%%%%%%%%%%%%%%%%%%%%%%%%%%%%%%%%%%%
\FloatBarrier
\section{Conclusion} \label{sec:conclusion}

In this article, we have studied two problems for the maximum data gathering using a mobile sink ($MS$) for time-sensitive applications. The $MS$ can adjust its movement speed while moving along a given path in the network. However, the speed of the $MS$ cannot go beyond a given maximum speed limit $V$. We have presented plane sweep based algorithms to find optimal data gathering schedule of the $MS$. In the first algorithm, the minimum time data gathering schedule of the mobile-sink is determined by controlling the data transmission schedule of the sub-sinks and speed of the $MS$, where the data availability values of the sub-sinks are known. The second algorithm improves the data gathering time and the throughput by optimizing the data availability values of the sub-sinks by controlling the data distribution from the sensors to the sub-sinks. It is observed from the experiment results that the data gathering time of Algorithm {\bf VS-UK-DA} is better than the Algorithm {\bf VS-K-DA}. But, energy consumption of {\bf VS-UK-DA} is higher than {\bf VS-K-DA} and {\bf FS-K-DA}. The results also show that both {\bf VS-K-DA} and {\bf VS-UK-DA} have better data gathering capability and throughput than {\bf FS-K-DA}. In future, we plan to find an optimal fixed speed of the $MS$ to improve the total data collection process. In addition, we will find an optimal path for the $MS$ to improve data collection for time-sensitive applications.

%%%%%%%%%%%%%%%%%%%%%%%%%%%%%%%%

\section{Acknowledgments} \label{sec11}

This work is supported by the Science \& Engineering Research Board, DST, Govt. of India [Grant numbers: ECR/2016/001035]

\bibliographystyle{abbrv}
\bibliography{References}

%\begin{thebibliography}{}

%\end{thebibliography}

\begin{IEEEbiography}[{\includegraphics[width=1in, height=1.25in, clip, keepaspectratio]{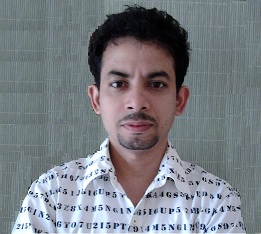}}]{Dr. Dinesh Dash} received Master of Technology in Computer Science and Engineering from University of Calcutta, India in 2004. From 2004 to 2007 he worked as a Lecturer under West Bengal University of Technology, India. He was awarded Ph.D. in 2013 from Indian Institute of Technology Kharagpur, India. His PhD research topics was on Coverage Problems in Wireless Sensor Network.  He worked as senior research associate from 2013 to 2014 at Infosys Limited, India. From 2013 to 2014 he worked as Assistant Professor at Tezpur University, Assam, India. Since Dec 2014, he is working as an Assistant Professor in the Dept of CSE, National Institute of Technology Patna, India. His current work focuses on sensor network coverage problem, data gathering problem, design of fault tolerant system in mobile AdHoc Network.
\end{IEEEbiography}

\end{document}